\definecolor{light-gray}{gray}{0.93}
\newcommand{\ie}{i.\,e.,\xspace}
\newcommand{\eg}{e.\,g.,\xspace}
\newcommand{\wrt}{w.\,r.\,t.\xspace}
\newcommand{\etal}{et al.\xspace}
\newcommand{\quot}[1]{``#1''}
\newcommand{\Oh}{\ensuremath{\mathcal{O}}}
\newcommand{\dist}[1]{\mathrm{dist}(#1)}
\newcommand{\ceil}[1]{\left\lceil #1 \right\rceil}
\newcommand{\lonerel}{\ensuremath{\mathrm{L1}_\mathrm{rel}}}
\newcommand{\ltworel}{\ensuremath{\mathrm{L2}_\mathrm{rel}}}
\newcommand{\erel}{\ensuremath{\mathrm{E}_\mathrm{rel}}}
\newcommand{\onen}[1]{\lVert #1 \rVert_1}
\newcommand{\twon}[1]{\lVert #1 \rVert_2}
\newcommand{\diag}[1]{\operatorname{diag}(#1)}
\newcommand{\diam}[1]{\operatorname{diam}(#1)}
\newcommand{\mat}[1]{\mathbf{#1}}
\newcommand{\myvec}[1]{\mathbf{#1}}
\newcommand{\ment}[3]{\mathbf{#1}[#2,#3]}
\newcommand{\vent}[2]{\myvec{#1}[#2]}
\newcommand{\Lpinv}{\mat{L}^\dagger}
\newcommand{\effres}[2]{\myvec{r}(#1,#2)}
\newcommand{\trace}[1]{\operatorname{tr}(#1)}
\newcommand{\uvec}[1]{\mathbf{e}_{#1}}
\newcommand{\onesvec}{\mathbf{1}}
\newcommand{\ecc}{\operatorname{ecc}}
\newcommand{\farnel}[2]{f^{el}(#2)}
\newcommand{\farnc}[2]{f^{c}(#2)}
\newcommand{\poly}[1]{\operatorname{poly}(#1)}
\newcommand{\numtrees}[4]{N_{#1,#4}(#2,#3)}
\newcommand{\bekasH}{\tool{Bekas-h}\xspace}
\newcommand{\bekas}{\tool{Bekas}\xspace}
\newcommand{\jltLamg}{\tool{Lamg-jlt}\xspace}
\newcommand{\jltKyng}{\tool{Julia-jlt}\xspace}
\newcommand{\ust}{\tool{UST}\xspace}
\newcommand{\tool}[1]{\textsf{#1}}
\newcommand{\nwk}{\tool{NetworKit}\xspace}
\newcommand{\instTabColSep}{3pt}
\newcommand{\parallelScalability}{
\begin{figure}
\centering
\begin{subfigure}[t]{.5\columnwidth}
\centering
\includegraphics{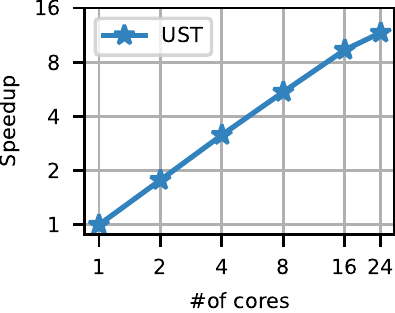}
\caption{Geometric mean of the speedup of \ust on multiple cores (shared memory) \wrt a sequential run.
Data points are aggregated over the instances of
Tables~\ref{tab:networks-medium-gt} and~\ref{tab:networks-medium}
\ifthenelse{\boolean{confversion}}{(see full version~\cite{angriman2020approximation})}{}.}
\label{fig:shared-mem-scalability}
\end{subfigure}\hfill
\begin{subfigure}[t]{.5\columnwidth}
\centering
\includegraphics{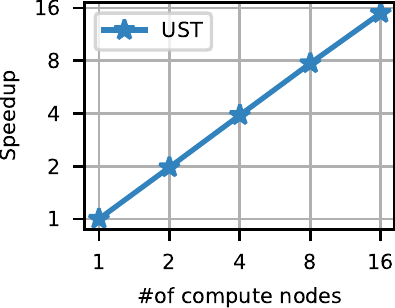}
\caption{Geometric mean of the speedup of \ust on multiple compute nodes \wrt \ust on a single compute node ($1\times 24$ cores).
Data points are aggregated over the instances of
Tables~\ref{tab:networks-medium-gt},~\ref{tab:networks-medium}, and~\ref{tab:networks-large}
\ifthenelse{\boolean{confversion}}{(see full version~\cite{angriman2020approximation})}{}.}
\label{fig:distr-mem-scalability}
\end{subfigure}\hfill
\caption{Parallel scalability of \ust ($\epsilon = 0.3$) with shared and with distributed memory.}
\label{fig:par-scal}
\end{figure}
}
\newcommand{\timeBreakdown}{
\begin{figure}
\centering
\begin{subfigure}[c]{.25\columnwidth}
\centering
\includegraphics{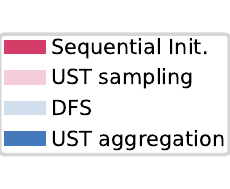}
\end{subfigure}\hfill
\begin{subfigure}[c]{.75\columnwidth}
\centering
\includegraphics{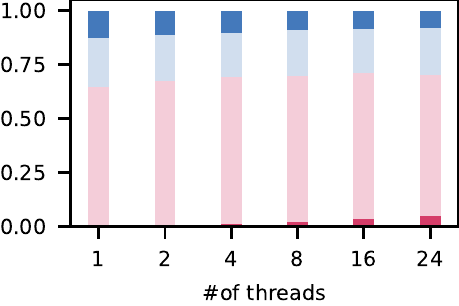}
\end{subfigure}
\caption{Breakdown of the running times of \ust with $\epsilon = 0.3$
\wrt \#of cores on $1\times 24$ cores. Data is aggregated with the geometric
mean over the instances of Tables~\ref{tab:networks-medium-gt}
and~\ref{tab:networks-medium} (see full version~\cite{angriman2020approximation}).}
\label{fig:breakdown}
\end{figure}
}
\newcommand{\syntheticInstances}{
\begin{figure}
\centering
\begin{subfigure}[t]{.5\textwidth}
\centering
\includegraphics{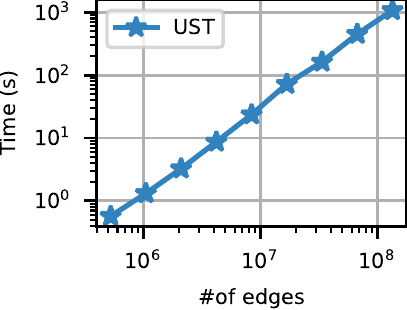}
\caption{Running time of \ust \wrt \#of edges.}
\label{fig:rmat-time}
\end{subfigure}\hfill
\begin{subfigure}[t]{.5\textwidth}
\centering
\includegraphics{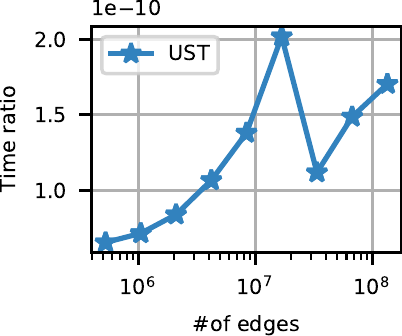}
\caption{Ratio of running time of \ust \wrt its theoretical running time
(see Theorem~\ref{thm:time-complexity}).}
\label{fig:rmat-ratio}
\end{subfigure}
\caption{Scalability of \ust on R-MAT graphs ($\epsilon = 0.3$, $1\times24$ cores).}
\label{fig:rmat}
\end{figure}
}
\newcommand{\lamgTol}{10^{-9}\xspace}
\newcommand{\shmemspeedup}{$\numprint{11.9}\times$\xspace}
\newcommand{\distrspeedup}{$\numprint{15.1}\times$\xspace}
\newcommand{\ustMaxAbsErr}{$\numprint{0.09}$\xspace}
\newcommand{\ustAvgAbsErr}{$\numprint{0.07}$\xspace}
\newcommand{\bekasMaxAbsErr}{$\numprint{2.43}$\xspace}
\newcommand{\bekasAvgAbsErr}{$\numprint{0.62}$\xspace}
\newcommand{\ustBekasSpeedup}{$\numprint{8.3}\times$\xspace}
\newcommand{\ustBekasHSpeedup}{$\numprint{25.4}\times$\xspace}
\newcommand{\bekasRanking}{$\numprint{4.3}\%$\xspace}
\newcommand{\ustRanking}{$\numprint{2.1}\%$\xspace}
\newcommand{\maxTimeHyp}{$\numprint{184}$\xspace}
\newcommand{\maxTimeRmat}{$\numprint{18}$\xspace}
\newcommand{\maxEdgesHyp}{$\numprint{83.9}$\xspace}
\newcommand{\maxEdgesRmat}{$\numprint{134.2}$\xspace}
\newcommand{\ustFracAllCores}{$\numprint{95.3}\%$\xspace}
\newcommand{\ustFracOneCore}{$\numprint{99.4}\%$\xspace}
\newcommand{\aggOverSampl}{$\numprint{2.2}\times$\xspace}
\newcommand{\avgAggregate}{$\numprint{31.2}\%$\xspace}
\newcommand{\avgSampling}{$\numprint{66.8}\%$\xspace}
\newcommand{\ustMemPR}{$\numprint{487.0}$\xspace}
\newcommand{\lamgMemPR}{$\numprint{1.6}$\xspace}
\newcommand{\changed}[1]{\textcolor{blue}{#1}}
\title{Approximation of the Diagonal of a Laplacian's Pseudoinverse
for Complex Network Analysis}
\titlerunning{Approximation of $\diag{\Lpinv}$ for Complex Network Analysis}
\author{Eugenio Angriman}{Department of Computer Science, Humboldt-Universit\"at zu Berlin, Germany}{angrimae@hu-berlin.de}{}{}
\author{Maria Predari}{Department of Computer Science, Humboldt-Universit\"at zu Berlin, Germany}{predarim@hu-berlin.de}{}{}
\author{Alexander van der Grinten}{Department of Computer Science, Humboldt-Universit\"at zu Berlin, Germany}{avdgrinten@hu-berlin.de}{}{}
\author{Henning Meyerhenke}{Department of Computer Science, Humboldt-Universit\"at zu Berlin, Germany}{meyerhenke@hu-berlin.de}{}{}
\authorrunning{E.\ Angriman, M.\ Predari, A.\ van der Grinten, H.\ Meyerhenke}
\keywords{Laplacian pseudoinverse, electrical centrality measures,
uniform spanning tree, effective resistance, parallel sampling}
\begin{document}

\sloppy

\maketitle

\begin{abstract}
The ubiquity of massive graph data sets in numerous applications requires fast algorithms for extracting
knowledge from these data.
We are motivated here by three electrical measures for the analysis of large small-world
graphs $G = (V, E)$ -- \ie graphs with diameter in $\mathcal{O}(\log |V|)$,
which are abundant in complex network analysis.
From a computational point of view, the three measures have in common that their crucial component is the
diagonal of the graph Laplacian's pseudoinverse, $\Lpinv$.
Computing $\diag{\Lpinv}$ exactly by pseudoinversion, however, is as expensive as dense matrix multiplication -- and the standard tools in practice even require
cubic time. Moreover, the pseudoinverse requires quadratic space -- hardly feasible for large graphs. Resorting to approximation by, \eg
using the Johnson-Lindenstrauss transform, requires the solution of $\Oh(\log |V| / \epsilon^2)$
Laplacian linear systems to guarantee a relative error, which is still very expensive for large inputs.

In this paper, we present a novel approximation algorithm
that requires the solution of only one Laplacian linear system.
The remaining parts are purely combinatorial --
mainly sampling uniform spanning trees, which we relate to $\diag{\Lpinv}$
via effective resistances. For small-world networks, our algorithm
obtains a $\pm \epsilon$-approximation with high probability, in a time that is
nearly-linear in $|E|$ and quadratic in $1 / \epsilon$.
Another positive aspect of our algorithm is its parallel nature due to independent sampling.
We thus provide two parallel implementations of our algorithm: one using OpenMP, one MPI + OpenMP.
In our experiments against the state of the art, our algorithm (i) yields more accurate approximation results
for $\diag{\Lpinv}$, (ii) is much faster and more memory-efficient, and (iii) obtains good parallel speedups,
in particular in the distributed setting.
\end{abstract}

\newpage

%%%%%%%%%%%%%%%%%%%%%%%%%%%%%%%%%%%%%%%%%%%%%%%%%%%%%%
\section{Introduction}
\label{sec:intro}
Massive graph data sets are abundant these days in numerous applications.
Extracting knowledge from these data thus requires fast algorithms.
One common matrix to represent a graph $G = (V,E)$ with $n$ vertices and $m$ edges
in algebraic algorithms is its Laplacian $\mat{L} = \mat{D}-\mat{A}$.
Here, $\mat{D}$ is the diagonal degree matrix with $\ment{D}{u}{u}$ being the (possibly weighted) degree of vertex
$u \in V$.
The matrix $\mat{A}$ is the (possibly weighted) adjacency matrix of $G$.
It is well-known that $\mat{L}$ does not have full rank and is thus not invertible.
Its Moore-Penrose pseudoinverse~\cite{DBLP:books/daglib/0086372} $\Lpinv$, in turn, has numerous
applications in physics and engineering~\cite{van2017pseudoinverse} as well as applied mathematics~\cite{DBLP:books/daglib/0086372}
and graph (resp.\ matrix) algorithms~\cite{DBLP:conf/www/0002PSYZ19}.

We are motivated by one particular class of applications:
\emph{electrical centrality measures}
for the analysis of small-world networks -- \ie graphs whose diameter is bounded by $\Oh(\log n)$.
Many important real-world networks (social, epidemiological,
information, biological, etc.) have the small-world feature~\cite{newman2018networks}.
Centrality measures, in turn, belong to the most widely used network analysis concepts
and indicate the importance of a vertex (or edge) in the network~\cite{DBLP:journals/im/BoldiV14}.
Numerous measures exist, some based on shortest paths, others consider paths of arbitrary
lengths. Electrical centrality measures fall into the latter category.
They exploit the perspective of graphs as electrical networks (see \eg~\cite{Lovasz1996}).
One well-known of such measures is
\emph{electrical closeness centrality}, a.\,k.\,a.\ \emph{current-flow closeness} or \emph{information centrality}~\cite{DBLP:conf/stacs/BrandesF05}, $c^{el}(\cdot)$.
It is the reciprocal of the average effective resistance $\effres{u}{\cdot}$ between $u$ and all other vertices:
\begin{equation}
  c^{el}(u) := \frac{n-1}{\sum_{v \in V \setminus \{ u \}} \effres{u}{v}}.
\end{equation}
In an electrical network corresponding to $G$, $\effres{u}{v}$ is the potential (voltage) difference
across terminals $u$ and $v$ when a unit current is applied between them~\cite{Ghosh:2008:MER:1350622.1350629}.
It can be computed by solving $\mat{L} \myvec{x} = \uvec{u} - \uvec{v}$ for $\myvec{x}$,
where $\uvec{z}$ is the canonical unit vector for vertex $z$. Then, $\effres{u}{v} = \vent{x}{u} - \vent{x}{v}$,
also see Section~\ref{sub:notation}.

Effective resistance also plays a major role in two other electrical measures we consider here,
\emph{normalized random-walk betweenness}~\cite{NarayanS18scaling} and \emph{Kirchhoff index centrality}~\cite{li2018kirchhoff}.
Also note that effective resistance is a graph metric with numerous other applications, well beyond its usage
in electrical centralities (cf. Refs.~\cite{DBLP:conf/innovations/AlevALG18,Ghosh:2008:MER:1350622.1350629}).
A straight\-forward way to compute electrical closeness (or the other two measures) would be to compute $\Lpinv$.
Without exploiting structure, this takes $\Oh(n^\omega)$ time,
where $\omega < 2.38$ is the exponent for fast matrix multiplication.
The standard tools in practice even require cubic time, cf.~\cite{RanjanZB14incremental}.
$\Lpinv$ is also in general a dense matrix (also for sparse $\mat{L}$).
Thus, full (pseudo)inversion is clearly limited to small inputs.

Conceptually similar to inversion would be to solve $\Theta(n)$ Laplacian linear systems.
In situations with lower accuracy demands, fewer linear systems suffice:
using the Johnson-Lindenstrauss transform (JLT) in connection with a fast
Laplacian solver such as Ref.~\cite{CohenKyng14},
one gets a relative approximation guarantee by solving $\Oh(\log n / \epsilon^2)$ systems~\cite{DBLP:journals/siamcomp/SpielmanS11}
in $\tilde{\Oh}(m \log^{1/2} n \log(1/\epsilon))$ time each,
where $\tilde{\Oh}(\cdot)$ hides a
$\Oh((\log \log n)^{3+\delta})$ factor for $\delta > 0$.

As pointed out previously~\cite{DBLP:journals/socnet/BozzoF13}, the (only) relevant part of $\Lpinv$
for computing electrical closeness is its diagonal (we will see that this is true for other
measures as well). Numerical methods for sampling-based approximation of the diagonal of implicitly
given matrices do exist~\cite{bekas2007est}. Yet, for our purpose, they solve
$\Oh(\log{n}/ \epsilon^2)$ Laplacian linear systems as well
to obtain an $\epsilon$-approximation with high probability, see Section~\ref{sub:related} for more details.

While this number of Laplacian linear systems can be
solved in parallel, their solution can still be time-consuming in practice, in part due to high constants hidden in the $\Oh$-notation.

%%%%%%%%%%%%%%%%%%%
\subparagraph{Contribution and Outline}
We propose a new algorithm for approximating $\diag{\Lpinv}$ of
a Laplacian matrix $\mat{L}$ that corresponds to weighted undirected graphs (Section~\ref{sec:algorithm}).
Our main technique is the approximation of effective resistances between
a pivot vertex $u \in V$ and all other vertices of $G$. It is based on sampling uniform (= random) spanning
trees (USTs). The resulting algorithm is highly parallel and (almost)
purely combinatorial -- it relies on the connection between Laplacian linear systems, effective resistances, and USTs.

For small-world graphs, our algorithm obtains an absolute $\pm\epsilon$-approximation guarantee
with high probability
in (sequential) time $\Oh(m \log^4 n \cdot \epsilon^{-2})$.
In particular, compared to using the fastest theoretical Laplacian solvers
in connection with JLT, 	our approach is off by only a polylogarithmic factor.
Probably more importantly, after some algorithm
engi\-nee\-ring (Section~\ref{sec:impl}), our algorithm performs
much better than the state of the art, already in our sequential experiments (Section~\ref{sec:experiments}):
(i) it is much faster and more memory-efficient,
(ii) it yields a maximum absolute error that is one order
of magnitude lower, and (iii) results in a more accurate complete
centrality ranking of elements of $\diag{\Lpinv}$.
Due to good parallel speedups, we can even compute a reasonably accurate diagonal of $\Lpinv$
on a small-scale cluster with 16 compute nodes in less than 8 minutes for a graph with $\approx 13.6$M vertices
and $\approx 334.6$M edges.
\ifthenelse{\boolean{confversion}}
{\textbf{Material omitted due to space constraints can be found in
  the appendix of the full version of this paper~\cite{angriman2020approximation}.}}
{\textbf{Material omitted from the main part can be found in
  the appendix.}}

%%%%%%%%%%%%%%%%%%%%%%%%%%%%%%%%%%%%%%%%%%%%%%%%%%%%%%
\section{Preliminaries}
\label{sec:prelim}
%
%%%%%%%%%%%%%%%%%%%%%%%%%%%%%%
\subsection{Problem Description and Notation}
\label{sub:notation}
We type vectors and matrices in bold font.
As input we consider simple, finite, connected undirected graphs $G = (V, E)$ with $n$ vertices,
$m$ edges, and non-negative edge weights $\myvec{w} \in \mathbb{R}_{\geq 0}^{m}$.
For the complexity analysis, we usually assume that the diameter of $G$ is $\Oh(\log n)$, but our algorithm would also work
correctly without this assumption.

\subparagraph{Graphs as electrical networks}
We interpret $G$ as an electrical network in which every edge $e \in E$ represents a resistor with resistance $1/\vent{w}{e}$.
In this context, it is customary to fix an arbitrary orientation $E^\pm$ of the edges in $E$ and to define
a unit $s$-$t$-current flow (also called electrical flow) in this network as a function
of the edges (written as vector) $\myvec{f} \in \mathbb{R}_{\geq 0}^{|E^{\pm}|}$.
Whenever possible, we use $\vent{f}{u,v}$ as shorthand notation for
$\vent{f}{\{u,v\}}$ or $\vent{f}{(u,v)}$. Note that $\vent{f}{e} = - \vent{f}{-e}$ for $e \notin E^\pm$.
This sign change in the $s$-$t$ current when the flow direction is
changed, is required to adhere to Kirchhoff's current law on flow conservation:
\begin{equation}
  \sum_{w \in \delta^+(v)} \vent{f}{v,w} - \sum_{u \in \delta^-(v)} \vent{f}{u,v} = \begin{cases}
    1 &  \text{if } v = s \\
    -1 & \text{if } v = t \\
    0 & \text{otherwise,}
  \end{cases}
\end{equation}
where $\delta^+(v)$ [$\delta^-(v)$] is the set of edges having $v$ as head [tail] in the orientation
we choose in $E^\pm$. Such a flow also adheres to Kirchhoff's voltage law (sum in cycle is zero
when considering flow directions) and Ohm's law (potential difference = resistance $\cdot$ current), cf.~\cite{10.5555/3086816,DBLP:books/daglib/0009415}.
The effective resistance between two vertices $u$ and $v$, $\effres{u}{v}$, is defined as
the potential
difference between $u$ and $v$ when a unit current is injected into $G$ at $u$ and extracted
at $v$, comp.~\cite[Ch.~IX]{DBLP:books/daglib/0009415}.
To compute $\effres{u}{v}$, let $\uvec{z}$ be the canonical unit vector for vertex $z$, \ie $\uvec{z}(z) = 1$ and
$\uvec{v} = 0$ for all vertices $v \neq z$. Then,
\begin{equation}
  \label{eq:eff-res}
  \effres{u}{v} = (\uvec{u} - \uvec{v})^T \Lpinv (\uvec{u} - \uvec{v}) = \ment{\Lpinv}{u}{u} - 2\ment{\Lpinv}{u}{v} + \ment{\Lpinv}{v}{v}
\end{equation}
or, equivalently, $\effres{u}{v} = \vent{x}{u} - \vent{x}{v}$, where $\myvec{x}$ is the solution vector of the Laplacian linear system $\mathbf{L} \myvec{x} = \uvec{u} - \uvec{v}$.
The Laplacian pseudoinverse, $\Lpinv$, can be expressed as $\Lpinv = (\mat{L} + \frac{1}{n} \mat{J})^{-1} - \frac{1}{n} \mat{J}$~\cite{van2017pseudoinverse},
where $\mat{J}$ is the $n \times n$-matrix with all entries being $1$.

Also note that the effective resistance between the endpoints of an
edge $e \in E$ equals the probability that $e$
is an edge in a uniform spanning tree (UST), \ie a spanning tree selected uniformly at random among
all spanning trees of $G$, cf.~\cite[Ch.~II]{DBLP:books/daglib/0009415}.

\subparagraph{Electrical Closeness}
The combinatorial counterpart of electrical closeness is based on shortest-path distances
$\operatorname{dist}(u,v)$ for vertices $u$ and $v$ in $G$:
$c^{c}(u) := \left(n-1\right) / \farnc{G}{u}$, where the denominator is the \emph{combinatorial farness} of $u$:
\begin{equation}
\label{eq:comb-farness}
\farnc{G}{u} := \sum_{v \in V \setminus \{u\}} \operatorname{dist}(u,v).
\end{equation}
Electrical farness $\farnel{G}{\cdot}$ is defined analogously to combinatorial farness -- shortest-path distances
in Eq.~(\ref{eq:comb-farness}) are replaced by effective resistances $\effres{u}{v}$.
Closeness centrality (both combinatorial and electrical) are not defined for disconnected graphs
due to infinite distances. We can get around this, however: a combinatorial generalization
for closeness called Lin's index (cf.~\cite{DBLP:journals/tkdd/BergaminiBCMM19}) can be adapted to the
electrical case, too. Thus, our assumption of $G$ being connected is no limitation.

\subparagraph{Normalized Random-Walk Betweenness}
Classical betweenness, based on shortest paths, is one of the most popular centrality measures.
The betweenness of vertices using random-walk routing
instead of shortest paths is given by the normalized random-walk
betweenness (NRWB)~\cite{NarayanS18scaling}.
This measure counts each random walk passing through a vertex only once.
By mapping the random walk problem to
current flowing in a network,
Ref.~\cite{NarayanS18scaling} obtains
a closed-form expression of NRWB and provides an analysis
of its scaling behavior as a function of $n$.
Then, the NRWB $c_b(\cdot)$ of a vertex $v$ is:
\begin{equation}\label{normalizerandomwalk}
  c_b(v) = \frac{1}{n}+\frac{1}{n-1} \sum_{t \neq v} \frac{\ment{M^{-1}}{t}{t} - \ment{M^{-1}}{t}{v}}{\ment{M^{-1}}{t}{t} + \ment{M^{-1}}{v}{v} -2\ment{M^{-1}}{t}{v}},
\end{equation}

where $ \mat{M} := \mat{L} + \mat{P} $, with $\mat{P}$ the projection
operator onto the zero eigenvector of the Laplacian $\mat{L}$ such that $\ment{P}{i}{j} = 1/n$. We show in Section~\ref{Sec:generalization} how to simplify this expression.

We also consider the Kirchhoff index and related centrality measures.
Their description can be found in
{Appendix~\ref{sub:app:Kirchhoff-descr}.}

%%%%%%%%%%%%%%%%%%%%%%%%%%%%%%
\subsection{Related Work}
\label{sub:related}
\subparagraph*{Solving Laplacian Systems}
  A straightforward approach to compute electrical closeness and related centralities
  is to compute $\Lpinv$, by solving a number of Laplacian systems.
  Brandes and Fleischer~\cite{DBLP:conf/stacs/BrandesF05}
  computed electrical closeness from the solution of $n$ linear systems
  using conjugate gradient (CG) in $\Oh(m n \sqrt{\kappa})$ time,
  where $\kappa$ is the condition number of the appropriately preconditioned Laplacian matrix.\footnote{Brandes and Fleischer provide a rough estimate of $\kappa$ as $\Theta(n)$,
  leading to a total time of $\Oh(m n^{1.5})$.}
  Later, Spielman and Srivastava proposed an approximation algorithm for computing effective
  resistance distances~\cite{DBLP:journals/siamcomp/SpielmanS11}.
The main ingredients of the algorithm are a dimension reduction with
Johnson-Lindenstrauss~\cite{johnson1984extensions} and
the use of a fast Laplacian solver for
$\Oh(\log{n}/\epsilon^2)$ Laplacian systems.
The algorithm approximates effective resistance values for all edges
within a factor of $(1 \pm \epsilon)$
in $\Oh(I(n,m)\log{n} /\epsilon^2)$ time,
where $I(n, m)$ is the running time of the Laplacian solver,
assuming that the solution of the Laplacian systems is exact.
With an approximate Laplacian solution, the algorithm yields a
$(1+\epsilon)^2$-approximation.
Significant progress in the development of fast Laplacian solvers with theoretical
guarantees~\cite{DBLP:conf/focs/KoutisMP11, KoutisMPSiam14, DBLP:journals/KelnerOrecchia13, CohenKyng14,KyngPenSachdeva16}
has resulted in the currently best one running in $\Oh(m\log^{1/2}{n}\log(1/\epsilon))$ time
(up to polylogarithmic factors)~\cite{CohenKyng14}.
  Parallel algorithms for solving linear systems on the more general SDD matrices
  also exist in the literature~\cite{peng2013efficient,blelloch2011}.
  To date, the fastest algorithms for electrical closeness and spanning edge centrality
  extend the idea of Spielman and Srivastava~\cite{DBLP:conf/siamcsc/BergaminiWLM16, mavroforakis2015spanning,Hayashi2016EfficientAF}.
  (Similar ideas are used for centrality measures based on the Kirchhoff index,
  see Appendix~\ref{sub:app:Kirchhoff-rel-work}).
  Since theoretical Laplacian solvers
  rely on heavy graph-theoretic machinery such as low-stretch
  spanning trees, one rather uses
  multigrid solvers~\cite{livne2012lean,DBLP:conf/siamcsc/BergaminiWLM16,koutis2011combinatorial}
  in practice instead.

\subparagraph*{Diagonal Estimation}
  Recall from the introduction that the diagonal (or in case of the Kirchhoff index even only its sum, the trace --
  {see Appendix~\ref{sub:app:Kirchhoff-descr})}
  of $\Lpinv$ is enough to compute electrical closeness and related measures, also see Eq.~(\ref{eq:elec-farness}) below.
Algorithms that approximate the diagonal (or the trace) of matrices
that are only implicitly available often use iterative methods~\cite{Sidje2011},
sparse direct methods~\cite{DBLP:journals/siamsc/AmestoyDLR15,JACQUELIN201884},
Monte Carlo~\cite{Hutchinson90} or
deterministic probing techniques~\cite{Tang10aprobing, bekas2007est}.
A popular approach is the standard Monte-Carlo method for the trace of $\mat{A}$,
due to Hutchinson~\cite{Hutchinson90}. The idea is to estimate the trace of $\mat{A}$
by observing the action of $\mat{A}$ (in terms of matrix-vector products)
on a sufficiently large sample of random vectors $r_k$. In our case, this would require
to solve a large number of Laplacian linear systems with vectors $r_k$ as
right-hand sides.
Avron and Toledo~\cite{DBLP:journals/jacm/AvronT11} proved that the method takes
$\Oh(\log{n} / \epsilon^2)$ samples to achieve a maximum error $\epsilon$ with probability at least $1-\delta$.
The approach from Hutchinson~\cite{Hutchinson90} has been extended by Bekas
\etal~\cite{bekas2007est} for estimating the diagonal of $\mat{A}$.
Finally, Barthelm\'e \etal~\cite{barthelm2019estimating} have recently proposed
a combinatorial algorithm to approximate the trace (not the diagonal) of the
inverse of a matrix closely related to the Laplacian. Their algorithm can be
seen as a special case of our algorithm in the situation where a universal
vertex exists.

\subparagraph*{Normalized Random Walk Betweenness}
Along with the introduction of the measure,
Ref.~\cite{NarayanS18scaling}
provided numerical evaluations of  Eq.~(\ref{normalizerandomwalk})
on various graph models.
However, no algorithm to compute the measure without (pseudo)inverting $\mat{L}$ has been
proposed yet.

%%%%%%%%%%%%%%%%%%%%%%%%%%%%%%%%%%%%%%%%%%%%%%%%%%%%%%
\section{Approximation Algorithm}
\label{sec:algorithm}
%
%%%%%%%%%%%%%%%%%%%%%%%%%%%
\subsection{Overview}
\label{sub:overview}
In order to compute the electrical closeness for all vertices in $V(G)$,
the main challenge is obviously to compute their electrical farness, $\farnel{G}{\cdot}$.
Recall from the introduction that the diagonal of $\Lpinv$ is sufficient to compute $\farnel{G}{\cdot}$
for all vertices simultaneously (comp.\ Ref.~\cite[Eq.~(15)]{DBLP:journals/socnet/BozzoF13}
with a slightly different definition of electrical closeness).
This follows from Eq.~(\ref{eq:eff-res}) and the fact that each row/column in $\Lpinv$ sums to $0$:
\vspace{-1ex}
\begin{align}
\label{eq:elec-farness}
\begin{split}
  \farnel{G}{u} & := \sum_{v \in V \setminus \{u\}} \effres{u}{v} = n \cdot \ment{\Lpinv}{u}{u} + \trace{\Lpinv} - 2 \sum_{v \in V} \ment{\Lpinv}{u}{v} = n \cdot \ment{\Lpinv}{u}{u} + \trace{\Lpinv},
\end{split}
\end{align}
since the trace $\trace{\cdot}$ is the sum over the diagonal entries.

We are interested in an approximation of $\diag{\Lpinv}$, since we do not necessarily need exact values for our
particular applications. To this end, we propose an approximation algorithm for which we give a rough overview first.
Our algorithm works best for small-world networks -- thus, we focus on this important input class.
Let $G$ be unweighted for now; we discuss the extension to weighted graphs
in Section~\ref{app:weighted-graphs}.
\begin{enumerate}
  \item Select\footnote{As we will see later on,
   one can improve the empirical running time
   when $u$ is not arbitrary, but chosen so as to have low eccentricity,
   \ie the length of its longest shortest path.
   The correctness and the asymptotic time complexity of the algorithm are not affected by the selection, though.} a pivot vertex $u \in V$ and solve the linear system $\mat{L}\myvec{x} = \uvec{u} - \frac{1}{n}\cdot \onesvec$,
    where $\onesvec = (1, \dots, 1)^T$.
        Out of all solutions $\myvec{x}$, we want the
      one such that $\myvec{x} \perp \onesvec$, since this unique normalized solution
      is equal to $\ment{\Lpinv}{:}{u}$, the column of
      $\Lpinv$ corresponding to $u$, also see Ref.~\cite[pp.~6-7]{van2017pseudoinverse}.
  \item Throughout the rest of this paper we denote $V' := V \setminus \{u\}$.
	As a direct consequence from Eq.~(\ref{eq:eff-res}), the diagonal entries $\ment{\Lpinv}{v}{v}$
    for all $v \in V'$ can be computed as:
    \begin{equation}
    \label{eq:diag-comp}
      \ment{\Lpinv}{v}{v} = \effres{u}{v} - \ment{\Lpinv}{u}{u} + 2 \ment{\Lpinv}{v}{u}.
    \end{equation}
  \item It remains to approximate these $n-1$ effective resistance values
      $\effres{\cdot}{\cdot}$. In order to do so, we employ Kirchhoff's
      theorem, which connects electrical flows with spanning
      trees~\cite[Ch.~II]{DBLP:books/daglib/0009415}. To this end, let $N$ be
      the total number of spanning trees of $G$;
      moreover, let $\numtrees{s}{a}{b}{t}$ be the number of spanning trees in which the unique path from $s$ to $t$
      traverses the edge $\{a,b\}$ \emph{in the direction from $a$ to $b$}.

	\medskip
    \begin{theorem}[Kirchhoff, comp.~\cite{DBLP:books/daglib/0009415}]
      \label{thm:Kirchhoff-N-values}
      Let $\vent{f}{a,b} := (\numtrees{s}{a}{b}{t} - \numtrees{s}{b}{a}{t}) / N$.
      Distribute current flows on the edges of $G$ by sending a current of size $\vent{f}{a,b}$
      from $a$ to $b$ for every edge $\{a,b\}$. Then there is a total current of size $1$
      from $s$ to $t$ satisfying Kirchhoff's laws.
    \end{theorem}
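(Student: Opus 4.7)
The plan is to verify that the combinatorially defined flow $\myvec{f}$ is the unique unit $s$-$t$ electrical current, i.e.\ that it satisfies Kirchhoff's current law at every vertex, has net flow $1$ from $s$ to $t$, and obeys Kirchhoff's voltage law on every cycle.

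The first step is to write $\myvec{f}$ as a uniform average over spanning trees. For each spanning tree $T$ of $G$, let $P_T$ denote the unique $s$-$t$ path in $T$, and define an auxiliary flow $f^T$ with $f^T(a,b) = +1$ if $P_T$ traverses $\{a,b\}$ from $a$ to $b$, $-1$ for the reverse orientation, and $0$ otherwise. Unpacking the definition of $\numtrees{s}{a}{b}{t}$ yields $\vent{f}{a,b} = \frac{1}{N}\sum_T f^T(a,b)$. From this decomposition KCL and the net-flow condition are immediate: every $f^T$ is a unit flow along a simple $s$-$t$ path, hence conserves flow at each internal vertex and contributes net $\pm 1$ at $s$ and $t$; linearity of the average transfers both properties to $\myvec{f}$.

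KVL is the delicate step, because the decomposition above does not suffice on its own: for an individual spanning tree $T$, the oriented cycle sum $\sum_{(a,b) \in C^{\pm}} f^T(a,b)$ need not vanish, so one must exploit cancellation across the ensemble of all $N$ spanning trees. The cleanest route is algebraic, via Kirchhoff's Matrix-Tree Theorem: both $N$ and $\numtrees{s}{a}{b}{t}$ can be written as cofactors of $\mat{L}$, and a short determinant calculation identifies $\vent{f}{a,b}$ with $(\ment{\Lpinv}{s}{a} - \ment{\Lpinv}{s}{b}) - (\ment{\Lpinv}{t}{a} - \ment{\Lpinv}{t}{b})$, i.e.\ with the potential difference produced by solving $\mat{L}\myvec{x} = \uvec{s} - \uvec{t}$ and applying Ohm's law across $\{a,b\}$. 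Since the resulting electric flow obeys KVL by construction, so does $\myvec{f}$. The main obstacle is precisely this bookkeeping between spanning-tree counts and Laplacian cofactors; once it is in place, both Kirchhoff laws and the unit-current claim follow.
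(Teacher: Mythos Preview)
The paper does not supply its own proof of this theorem: it is quoted as a classical result and attributed to Bollob\'as's textbook (the ``comp.'' in the heading signals ``compare with''). So there is no in-paper argument to benchmark your proposal against.

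As for the substance of your sketch: the decomposition $\myvec{f} = \tfrac{1}{N}\sum_T f^T$ and the derivation of KCL and the unit-flow condition from it are clean and correct. Your handling of KVL, however, hides all of the real work behind the phrase ``a short determinant calculation.'' Identifying $\numtrees{s}{a}{b}{t}/N$ with the appropriate Laplacian cofactor ratio and then with $(\uvec{s}-\uvec{t})^{T}\Lpinv(\uvec{a}-\uvec{b})$ \emph{is} the theorem; it is not bookkeeping but the crux, and it requires either the full Matrix-Tree machinery together with a careful cofactor expansion, or a combinatorial involution on spanning trees that pairs up contributions around each cycle. Either route is standard, but neither is short. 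If you intend this as a proof rather than a pointer to the literature, that step needs to be written out.
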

	\medskip

    As a result of Theorem~\ref{thm:Kirchhoff-N-values}, the effective
    resistance between $s$ and $t$ is the potential difference between $s$ and
    $t$ induced by the current-flow given by $\myvec{f}$. Vice versa, since the
    current flow is induced by potential differences (Ohm's law),
    one simply has to add the currents on a path from $s$ to $t$ to compute
    $\effres{s}{t}$ (see Eq.~(\ref{eq:N-values-in-path}) in
    Section~\ref{sub:tree-sampling}). Actually, as a proxy for the current
    flows, we use the (approximate) $N(\cdot)$-values mentioned in
    Theorem~\ref{thm:Kirchhoff-N-values}.
  \item It is impractical to compute exact values for $N$ (\eg by Kirchhoff's
      matrix-tree theorem~\cite{godsil2013algebraic}, which would require the
      determinant or all eigenvalues of $\Lpinv$) or $N(\cdot)$ for large
      graphs.
      Instead, we obtain approximations of the desired values via sampling:
      we sample a set of uniform spanning trees and determine the $N(\cdot)$-values by aggregation
      over all sampled trees. This approach provides a probabilistic absolute approximation guarantee.
\end{enumerate}

The pseudocode of the algorithm, in adjusted order,
is shown and discussed as Algorithm~\ref{alg:approx-diag-lpinv} in Appendix~\ref{app:approx-algo}.
Components and properties of Algorithm~\ref{alg:approx-diag-lpinv} are explained in the remainder
of Section~\ref{sec:algorithm}; reading Section~\ref{sub:tree-sampling} while/before studying the pseudocode is recommended.

Note that Steps 2-4 of the algorithm are entirely combinatorial.
Step~1 may or may not be combinatorial, depending on the Laplacian solver used.
Corresponding implementation choices are discussed in Section~\ref{sec:impl}.

%%%%%%%%%%%%%%%%%%%%%%%%%%%
\subsection{Effective Resistance Approximation by UST Sampling}
\label{sub:tree-sampling}
Extending and generalizing work by Hayashi \etal~\cite{Hayashi2016EfficientAF} on spanning edge centrality,
our main idea is to compute a sufficiently large sample of USTs and to aggregate the $N(\cdot)$-values
of the edges in these USTs.
Given $G$ and an electrical flow with source $u$ and sink $v$, recall that the effective resistance
between them is the potential difference $\vent{x}{u} - \vent{x}{v}$, where $\myvec{x}$ is the solution
vector in $\mat{L} \myvec{x} = \uvec{u} - \uvec{v}$. Since $\myvec{x}$ is a
potential and the electrical flow $\myvec{f}$ results from its difference, $\effres{u}{v}$ can be computed
given \emph{any} path $\langle u = v_0, v_1, \dots, v_{k-1}, v_k = v \rangle$ as:
\begingroup
\allowdisplaybreaks
\begin{align}
\begin{split}
  \effres{u}{v} & = \sum_{i=0}^{k-1} \vent{f}{v_i, v_{i+1}} \\
                & = 1/N \sum_{i=0}^{k-1} \left(\numtrees{u}{v_i}{v_{i+1}}{v} - \numtrees{u}{v_{i+1}}{v_i}{v}\right) \label{eq:N-values-in-path}.
\end{split}
\end{align}
\endgroup
Recall that the sign of the current flow changes if we traverse an edge against the flow direction.
This is reflected by the second summand in the sum of Eq.~(\ref{eq:N-values-in-path}).
Since we can choose any path from $u$ to $v$, for efficiency reasons we use \emph{one shortest} path
$P(v)$ per vertex $v \in V'$. We compute these paths with one breadth-first search
(BFS) with root $u$, resulting in a tree $B_u$ whose edges are considered as
implicitly directed from the root to the leaves.
For each vertex $v \in V'$, we maintain an estimate
$R[v]$ of $\effres{u}{v}$, which is initially set to $0$ for all $v$.
After all USTs have been processed, we divide all entries of $R$ by $\tau$, the number of sampled
trees, \ie $\tau$ takes the role of $N$ in Eq.~(\ref{eq:N-values-in-path}).

%%%%%%%%%%%%%%%%%%%%%%%%%%%
\subparagraph{Sampling USTs}
In total, we sample $\tau$ USTs, where $\tau$ depends on
the desired approximation guarantee and is determined later.
The choice of the UST algorithm depends on the input:
for general graphs, the algorithm by Schild~\cite{Schild18} with time complexity
$\Oh(m^{1+o(1)})$ is the fastest. Among others, it uses a sophisticated shortcutting
technique using fast Laplacian solvers to speed up the
classical Aldous-Broder~\cite{Aldous90,Broder89} algorithm.
For unweighted small-world graphs, however, Wilson's simple algorithm
using loop-erased random walks is in $\Oh(m \log n)$, as outlined in Appendix~\ref{app:Wilson}.
Thus, for our class of inputs, Wilson's algorithm is preferred.

\subparagraph{Data structures}
When computing the contribution of a UST $T$ to $N(\cdot)$,
we need to update for each edge $e = (a,b) \in E(T)$ its contribution
to $\numtrees{u}{a}{b}{v}$ and $\numtrees{u}{b}{a}{v}$, respectively -- for exactly every
vertex $v$ for which $(a,b)$ [or $(b,a)$] lies on $P(v)$.
Hence, the algorithm that aggregates the contribution of UST $T$
to $R$ will need to traverse $P(v)$ for each vertex $v \in V$.
To this end, we represent the BFS tree $B_u$ as an array of parent pointers for each vertex $v \in V$.
On the other hand, the tree $T$ can conveniently be represented by
storing a child and a sibling for each vertex $v \in V$.
Compared to other representations (such as adjacency lists),
this data structure can be constructed and traversed
with low constant overhead.

%%%%%%%%%%%%%%%%%%%%%%%%%%%
\subparagraph{Tree Aggregation}
After constructing a UST $T$, we process it to update the intermediate effective resistance
values $R[\cdot]$.
Note that we can discard $T$ afterwards and do not have to store the full sample,
which has a positive effect on the memory footprint of our algorithm.
The aggregation algorithm is shown as Algorithm~\ref{alg:aggregate} in
Appendix~\ref{app:approx-algo}.
Recall that we need to determine for each vertex $v$
and each edge $(a, b) \in P(v)$, whether $(a, b)$ or $(b, a)$ occurs
on the unique $u$-$v$ path in $T$.
To simplify this test, we root $T$ at $u$; hence, it is
	enough to check if $(a, b)$ [or $(b, a)$] appears above
	$v$ in $T$.
For general graphs, the test still incurs quadratic overhead in running time
(in particular, the number of vertex-edge pairs that
need to be considered is $\farnc{G}{u} = \sum_{v \in V'} |P(v)| = \mathcal{O}(n^2)$).
We remark that, perhaps surprisingly, a bottom-up traversal of $T$
does not improve on this, either; it is similarly difficult to determine all $R[v]$ that a given
$(a, b) \in E(T)$ contributes to (those $v$ form an arbitrary subset of
descendants of $b$ in $T$).
However, we can exploit the fact that on small-world networks,
the depth of $B_u$ can be controlled, \ie $\farnc{G}{u}$ is sub-quadratic.
To accelerate the test, we first compute a DFS data structure for $T$,
\ie we determine discovery and finish timestamps for all vertices in $V$, respectively.
For an arbitrary $v \in V$ and $(a, b) \in V \times V$, this
data structure allows us to answer
in constant time (i) whether either $(a, b)$ or $(b, a)$ is in $T$
and
(ii) if $(a, b) \in E(T)$, whether $v$ appears below $(a, b)$ in $T$.
Finally, we loop over all $v \in V$ and all $e = (a, b) \in P(v)$
and aggregate the contribution of $T$ to $\numtrees{u}{a}{b}{v}$.
To do so, we add [subtract] $1$ to [from] $R[v]$ if $e$ has the same [opposite] direction in $B_u$.
If $e$ is not in $B_u$, $R[v]$ does not change.

%%%%%%%%%%%%%%
\subsection{Algorithm Analysis}
\label{sub:analysis}
The choice of the pivot $u$ has an effect
on the time complexity of our algorithm. The intuitive reason is that the
BFS tree $B_u$ should be shallow in order to have short paths to the root $u$,
which is achieved by a $u$ with small eccentricity.
The proofs of this subsection can be found in Appendices~\ref{sub:app-lemma-agg-time} and~\ref{sub:app-thm-time}.
Regarding aggregation, we obtain:
\begin{lemma}
\label{lem:aggregation-time}
Tree aggregation (Algorithm~\ref{alg:aggregate} in Appendix~\ref{app:approx-algo})
has time complexity $\Oh(\farnc{G}{u})$, which can be bounded by
$\Oh(n \cdot \ecc(u)) = \Oh(n \cdot \diam{G})$.
\end{lemma}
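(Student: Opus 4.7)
The plan is to account for the three sources of work in Algorithm~\ref{alg:aggregate} separately: the DFS preprocessing on the sampled UST $T$, the constant-time check inside the loop body, and the total number of iterations of the inner loop.

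First I would observe that the DFS call in Line~\ref{line:tree-dfs} runs on the spanning tree $T$, which has exactly $n-1$ edges, so it contributes $\Oh(n)$ time and produces the arrays $\alpha(\cdot)$ and $\Omega(\cdot)$ needed for ancestor/descendant queries. Next, I would verify that the body of the doubly nested loop executes in $\Oh(1)$ time: every test on Lines~\ref{line:tree-sign-w}, \ref{line:tree-desc-w}, \ref{line:tree-sign-wprime}, and \ref{line:tree-desc-wprime} is a comparison of two precomputed timestamps, and Lines~\ref{line:tree-add} and~\ref{line:tree-sub} are scalar updates on $R[v]$. This is the key reason the DFS preprocessing is worthwhile; without the timestamps one would need nontrivial tree queries per edge.

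The main step is then to count the iterations of the nested loop. The inner loop for a fixed $v$ ranges over $(a,b) \in P(v)$, where $P(v)$ is the unique path from $u$ to $v$ in the BFS tree $B_u$. Since $B_u$ is a BFS tree rooted at $u$, the number of edges on $P(v)$ equals $\operatorname{dist}_G(u,v)$. Summing over $v \in V'$ gives a total of
\begin{equation*}
\sum_{v \in V'} |P(v)| \;=\; \sum_{v \in V'} \operatorname{dist}_G(u,v) \;=\; \farnc{G}{u}
\end{equation*}
iterations, using the definition of combinatorial farness in Eq.~(\ref{eq:comb-farness}). Combined with the $\Oh(n)$ DFS and the $\Oh(1)$ body, the overall running time is $\Oh(n + \farnc{G}{u}) = \Oh(\farnc{G}{u})$, since $\farnc{G}{u} \geq n-1$ whenever $G$ is connected (each distance is at least $1$).

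Finally, I would derive the stated upper bounds by trivial majorization: $\operatorname{dist}_G(u,v) \leq \ecc(u)$ for every $v$, hence $\farnc{G}{u} \leq (n-1)\ecc(u) = \Oh(n \cdot \ecc(u))$, and $\ecc(u) \leq \diam{G}$ yields $\Oh(n \cdot \diam{G})$. I do not anticipate a real obstacle here; the only point deserving care is the constant-time justification of the conditions on Lines~\ref{line:tree-sign-w} and~\ref{line:tree-sign-wprime} (whose author's text is flagged as needing a correction to a parent-pointer check), but in either formulation the test remains $\Oh(1)$ given an explicit parent array for $T$, so the complexity bound is unaffected.
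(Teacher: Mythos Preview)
Your proof is correct and follows essentially the same approach as the paper: the paper's own proof also accounts for the $\Oh(n)$ DFS on $T$, counts the $\Oh(\farnc{G}{u})$ vertex-edge pairs via $|P(v)| = \dist{u,v}$ with $\Oh(1)$ work each, and then bounds $\farnc{G}{u}$ by $n \cdot \ecc(u) \leq n \cdot \diam{G}$. Your write-up is more detailed (in particular the absorption of the $\Oh(n)$ DFS term into $\Oh(\farnc{G}{u})$ and the remark on the parent-pointer check), but the argument is the same.
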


In high-diameter networks, the farness of $u$ can become quadratic (consider a path graph)
and thus problematic for large inputs.
In small-world graphs, however, we obtain $\Oh(n \log n)$ per aggregation.
We continue the analysis with the main algorithmic result.
\begin{theorem}
\label{thm:time-complexity}
Let $G$ be an undirected and unweighted graph with $n$ vertices, $m$ edges, diameter $\diam{G}$
and Laplacian $\mat{L} = \mat{L}(G)$.
Then, our diagonal approximation algorithm (Algorithm~\ref{alg:approx-diag-lpinv} in Appendix~\ref{app:approx-algo})
computes an approximation of $\diag{\Lpinv}$
with absolute error $\pm \epsilon$ with probability $1-\delta$ in time
$\Oh(m \cdot \ecc^3(u) \cdot \epsilon^{-2} \cdot \log(m/\delta))$.
For small-world graphs and with $\delta := 1/n$ to get high probability,
this yields a time complexity of $\Oh(m \log^4 n \cdot \epsilon^{-2}$).
\end{theorem}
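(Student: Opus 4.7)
The proof decomposes naturally into a running-time part and a correctness/error-probability part.

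For the running time, I would simply add up the costs line by line in Algorithm~\ref{alg:approx-diag-lpinv}. The initialization and BFS cost $\Oh(n+m)$. The sampling loop runs $\tau = \Theta(\ecc(u)^2 \epsilon^{-2} \log(m/\delta))$ times; by Lemma~\ref{lem:ust-time} each call to Wilson's algorithm costs $\Oh(\ecc(u) \cdot m)$, and by Lemma~\ref{lem:aggregation-time} each call to \textsc{Aggregate} costs $\Oh(n \cdot \ecc(u))$, so the loop contributes $\Oh(\ecc(u)^3 \cdot m \cdot \epsilon^{-2} \log(m/\delta))$. The single Laplacian solve in Line~\ref{line:linear-system} with tolerance $\eta$ costs $\tilde{\Oh}(m \log^{1/2} n \log(1/\eta))$; since $1/\eta$ is polynomial in $n, m, 1/\epsilon$ and $\diam{G}$, this is absorbed by the sampling cost. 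For small-world graphs, $\ecc(u) = \Oh(\log n)$ and $\log(m/\delta) = \Oh(\log n)$ when $\delta = 1/n$, yielding $\Oh(m \log^4 n \cdot \epsilon^{-2})$.

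The correctness argument begins by showing that for every $v \in V'$, the per-tree contribution $X_i^v$ to $R[v]$ is an unbiased estimator of $\effres{u}{v}$. Each edge $e$ on $P(v)$ contributes $+1$, $-1$, or $0$ to $R[v]$ depending on whether $e$ (oriented from $u$ toward $v$), its reverse, or neither appears in the sampled UST $T_i$. By the definition of $\numtrees{\cdot}{\cdot}{\cdot}{\cdot}$, the expected contribution of $e = (a,b) \in P(v)$ is exactly $(\numtrees{u}{a}{b}{v} - \numtrees{u}{b}{a}{v})/N = \vent{f}{a,b}$, and summing along $P(v)$ gives $\E[X_i^v] = \effres{u}{v}$ by Theorem~\ref{thm:Kirchhoff-N-values} and Eq.~(\ref{eq:N-values-in-path}). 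Since $|X_i^v| \leq |P(v)| \leq \ecc(u)$, Hoeffding's inequality yields
\begin{equation*}
\Pr\bigl[\,|R[v]/\tau - \effres{u}{v}| > (1-\kappa)\epsilon\,\bigr] \leq 2\exp\!\bigl(-\tau (1-\kappa)^2 \epsilon^2 / (2\ecc(u)^2)\bigr).
\end{equation*}
With the choice of $\tau$ in the algorithm and a union bound over the $n-1 \leq m$ non-pivot vertices, the event that \emph{all} estimates are within $(1-\kappa)\epsilon$ of their expectations holds with probability at least $1-\delta$.

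It remains to pass from the accurate $R[v]/\tau$ to an accurate $\ment{\widetilde{\Lpinv}}{v}{v}$. Using Eq.~(\ref{eq:diag-comp}), the error decomposes as the sampling error $|R[v]/\tau - \effres{u}{v}|$ plus the propagated error from the Laplacian solve, $|\vent{x}{u} - \ment{\Lpinv}{u}{u}| + 2|\vent{x}{v} - \ment{\Lpinv}{v}{u}|$. This second piece is where the awkward choice $\eta = \kappa\epsilon / (3\sqrt{mn \log n}\,\diam{G})$ is used: standard Laplacian solvers deliver a relative guarantee in the $\mat{L}$-norm, and converting this into an $\ell_\infty$ guarantee on the individual coordinates of $\myvec{x} - \ment{\Lpinv}{:}{u}$ requires bounding $\twon{\ment{\Lpinv}{:}{u}}$ in terms of $n$, $m$ and $\diam{G}$. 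This is the main obstacle of the proof: once a clean bound is established (ensuring the coordinate-wise error is at most $\kappa\epsilon/3$), combining it with the sampling error $(1-\kappa)\epsilon$ via the triangle inequality yields total error at most $\epsilon$, completing the proof. Specializing $\ecc(u), \diam{G} = \Oh(\log n)$ and $\delta = 1/n$ then gives the small-world complexity stated in the theorem.
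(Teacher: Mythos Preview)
Your proposal is correct in outline and matches the paper's structure: running-time bookkeeping, Hoeffding for the sampling error, and a norm conversion for the solver error. Two points are worth noting.

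First, your Hoeffding argument differs from the paper's. You apply Hoeffding once per vertex $v$ to the aggregate contribution $X_i^v \in [-\ecc(u), \ecc(u)]$ and then union-bound over the $n-1$ vertices. The paper instead applies Hoeffding \emph{per edge} to the estimated flow $\tilde f(e)$, obtaining $|\tilde f(e) - f(e)| \leq \epsilon'/\ecc(u)$ for all $e$ with probability $\geq 1-\delta$, and then sums over the at most $\ecc(u)$ edges of $P(v)$. Your route is more direct and avoids a subtlety in the paper's version (the flow $f(e)$ in Eq.~(\ref{eq:N-values-in-path}) actually depends on the target $v$, so strictly speaking the union bound there should range over edge--vertex pairs rather than just edges; this does not affect the asymptotics). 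Either argument yields the stated $\tau$ up to constants.

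Second, you correctly flag the $\mat{L}$-norm $\to$ $\ell_\infty$ conversion as the main technical step but stop short of carrying it out. The paper does this via an explicit norm-equivalence lemma: for $\myvec{x} \perp \onesvec$ one has $\sqrt{\lambda_2}\,\|\myvec{x}\|_\infty \leq \|\myvec{x}\|_{\mat{L}} \leq \sqrt{2\operatorname{vol}(G)}\,\|\myvec{x}\|_\infty$. Together with the spectral-gap bound $\lambda_2 \geq 4/(n\cdot\diam{G})$ and the bound $\|\ment{\Lpinv}{:}{u}\|_\infty \leq \diam{G}$ (obtained by writing $\uvec{u} - \tfrac{1}{n}\onesvec = \tfrac{1}{n}\sum_{v\neq u}(\uvec{u}-\uvec{v})$ and using $\effres{u}{v}\leq\dist{u,v}$), this gives $\|\tilde{\myvec{x}} - \myvec{x}\|_\infty \leq \kappa\epsilon/3$, hence the contribution to the diagonal error via Eq.~(\ref{eq:diag-comp}) is at most $\kappa\epsilon$. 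Note that the paper does \emph{not} go through $\twon{\ment{\Lpinv}{:}{u}}$ as you suggest; it stays in the $\mat{L}$-norm on the right-hand side and only passes to $\ell_\infty$ at the last step.
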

Thus, for small-world networks, we have an approximation algorithm whose
running time is nearly-linear in $m$ (\ie linear up to a polylogarithmic
factor), quadratic in $1 / \epsilon$, and logarithmic in $1 / \delta$. By
choosing a \enquote{good} pivot $u$, it is often possible to improve the
running time of Algorithm~\ref{alg:approx-diag-lpinv} by a constant factor (\ie
without affecting the $\Oh$-notation). In particular, there are vertices $u$
with $\ecc(u)$ as low as $\frac 12 \diam{G}$.
A discussion on the algorithm's parallelization in the work-depth model can be
found in Appendix~\ref{sub:app:parallelism}.

\begin{remark}
If $G$ has constant diameter, Algorithm~\ref{alg:approx-diag-lpinv} has
time complexity $\Oh(m \log n \cdot \epsilon^{-2})$
to obtain an absolute $\epsilon$-approximation guarantee.
This is faster than the best JLT-based approximation
(which provides a relative guarantee instead).
\end{remark}

%%%%%%%%%%%%%%%%%%%%%%%%%%%%%%%
\subsection{Generalizations}
\label{Sec:generalization}
In this section we show how our algorithm can be adapted to work
for weighted graphs and for normalized random-walk betweenness as well.
The extensions to Kirchhoff-related indices are presented
in Appendix~\ref{sub:app:Kirchhoff-results}.

\subparagraph{Extension to Weighted graphs}
\label{app:weighted-graphs}
For an extension to weighted graphs, we need a weighted version
of Kirchhoff's theorem. To this end, the weight of a spanning tree $T$ is defined as the product of
the weights (= conductances) of its edges. Then, let $N^*$ be the sum of the weights of all spanning trees
of $G$; also, let $N^*_{s,t}(a, b)$ be the sum of the weights of all spanning trees in which
the unique path from $s$ to $t$ traverses the edge $\{a,b\}$ {in the direction from $a$ to $b$}.
\begin{theorem}[comp.~\cite{DBLP:books/daglib/0009415}, p.~46]
  \label{thm:Kirchhoff-N-values-weighted}
There is a distribution of currents satisfying Ohm's law and Kirchhoff's laws
in which a current of size $1$ enters at $s$ and leaves at $t$. The value of
the current on edge $\{a,b\}$ is given by $(N^*_{s,t}(a, b) - N^*_{s,t}(b, a))
/ N^*$.
\end{theorem}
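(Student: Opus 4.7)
The plan is to follow the same two-step blueprint as for the unweighted Theorem~\ref{thm:Kirchhoff-N-values}, replacing spanning-tree counts by weighted sums. Define the candidate flow on each oriented edge by
\[
  \vent{f}{a,b} := \left( N^*_{s,t}(a,b) - N^*_{s,t}(b,a) \right) / N^*,
\]
and verify (i) Kirchhoff's current law with the correct source/sink conditions at $s,t$, and (ii) Ohm's law. Since an electrical flow satisfying both laws with prescribed net in/outflow at $s,t$ is unique, these two properties fully characterize the claimed distribution.

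For (i), fix an interior vertex $v \notin \{s,t\}$ and consider any spanning tree $T$ of weight $w(T) := \prod_{e \in T} \vent{w}{e}$. Either the unique $s$-$t$ path in $T$ avoids $v$, in which case $T$ contributes $0$ to every $N^*_{s,t}$ term incident to $v$; or the path enters $v$ along a unique edge $\{a,v\}$ and leaves along a unique edge $\{v,b\}$. In the latter case, $T$ adds $w(T)$ to both $N^*_{s,t}(a,v)$ and $N^*_{s,t}(v,b)$. In the combination $\sum_{x \sim v} \vent{f}{x,v}$, the $x=a$ term picks up $+w(T)$ and the $x=b$ term picks up $-w(T)$, so $T$'s total contribution cancels. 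Summing over all spanning trees gives $\sum_{x \sim v} \vent{f}{x,v} = 0$. The same argument applied at $s$ (paths only leave $s$) and $t$ (paths only enter $t$) yields $\sum_{x \sim s} \vent{f}{x,s} = -1$ and $\sum_{x \sim t} \vent{f}{x,t} = +1$, matching the unit $s$-$t$ current.

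The main obstacle is (ii), Ohm's law. My preferred route is the weighted matrix-tree theorem of Kirchhoff and Tutte, which gives $\det(\mat{L}_v) = N^*$ for the reduced Laplacian $\mat{L}_v$ (delete the row and column of any vertex $v$), together with its all-minors extension, which expresses $N^*_{s,t}(a,b)$ as a signed determinant of an appropriate submatrix of $\mat{L}$. Solving the Laplacian system $\mat{L}\myvec{x} = \uvec{s} - \uvec{t}$ by Cramer's rule and simplifying the resulting cofactors with the matrix-tree identities yields
\[
  \vent{w}{a,b}\,(\vent{x}{a} - \vent{x}{b}) = \left( N^*_{s,t}(a,b) - N^*_{s,t}(b,a) \right) / N^*,
\]
so the combinatorial flow coincides edge-by-edge with the genuine electrical flow induced by the potential $\myvec{x}$. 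Hence $f$ obeys Ohm's law, and therefore Kirchhoff's voltage law on every cycle, which completes the proof.

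A purely combinatorial alternative is to prove the voltage law around an arbitrary cycle by induction on $|E|$, using the weighted deletion–contraction identity $N^* = \vent{w}{e}\, N^*(G / e) + N^*(G \setminus e)$ and its analogues for $N^*_{s,t}(a,b)$; this avoids spectral machinery but requires tracking many case distinctions on whether $e$ lies on the $s$-$t$ path, so I would default to the matrix-tree argument for brevity.
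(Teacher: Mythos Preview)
The paper does not supply its own proof of this statement: Theorem~\ref{thm:Kirchhoff-N-values-weighted} is quoted directly from Bollob\'as (Modern Graph Theory, p.~46) and is treated as a black box in Section~\ref{Sec:generalization}. So there is nothing in the paper to compare your argument against.

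Your proposal is a correct outline of the classical proof and is essentially what one finds in the cited reference. The verification of Kirchhoff's current law by the ``in-edge/out-edge cancellation'' argument over weighted spanning trees is clean and correct, including the boundary computations at $s$ and $t$ that yield the unit net flow. For Ohm's law, the route via the weighted matrix-tree theorem and Cramer's rule applied to $\mat{L}\myvec{x} = \uvec{s} - \uvec{t}$ is exactly the standard argument; the identity you write down, $\vent{w}{a,b}(\vent{x}{a}-\vent{x}{b}) = (N^*_{s,t}(a,b)-N^*_{s,t}(b,a))/N^*$, is the heart of the matter and does require the all-minors extension (Chaiken's theorem) to make the cofactor-to-$N^*_{s,t}$ identification rigorous. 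If you were to flesh this out, that is the one step where hand-waving would be most dangerous: the sign bookkeeping between cofactors of the reduced Laplacian and the oriented path counts $N^*_{s,t}(a,b)$ is delicate.

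The deletion--contraction alternative you mention also works but, as you note, is considerably more tedious; Bollob\'as takes the linear-algebraic route.
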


Consequently, our sampling approach needs to estimate $N^*$ as well as the $N^*(\cdot)$-values.
It turns out that no major changes are necessary. Wilson's algorithm also yields a UST for weighted
graphs (if its random walk takes edge weights for transition probabilities into account)~\cite{Wilson:1996:GRS:237814.237880}.
Yet, the running time bound for Wilson needs to mention the graph volume,
$\operatorname{vol}(G)$, explicitly now: $\Oh(\ecc(u) \cdot \operatorname{vol}(G))$.
The weight of each sampled spanning tree can be accumulated during each run of Wilson.
It has to be integrated into Algorithm~\ref{alg:aggregate} by
adding [subtracting] the tree weight in Line~\ref{line:tree-add} [Line~\ref{line:tree-sub}]
instead of 1. For the division at the end, one has to replace $\tau$ by the total weight
of the sampled trees.
Finally, the tree $B_u$ remains a BFS tree. The eccentricity and farness of $u$ then still refer in the analysis
to their unweighted versions, respectively, as far as $B_u$ is concerned.

To conclude, the only important change regarding bounds happens in Theorem~\ref{thm:time-complexity}.
In the time complexity, $m$ is replaced by $\operatorname{vol}(G)$.

\subparagraph{Normalized Random-Walk Betweenness}
Ref.~\cite{NarayanS18scaling} proposes normalized random-walk betweenness
as a measure for the influence of a vertex
in the network, but the paper does not provide an
algorithm (beyond implicit (pseudo)inversion). We propose to compute normalized
random-walk betweenness with Algorithm~\ref{alg:approx-diag-lpinv}
and derive (proof
{in Appendix~\ref{app:nrwb}):}
\begin{lemma}
\label{lem:nrwb-derivation}
Normalized random-walk betweenness $c_b(v)$
(Eq.~(\ref{normalizerandomwalk}))
can be rewritten as:
\begin{align}
  \label{lem:new_norma_def}
  c_b(v) &= \frac{1}{n}+\frac{\trace{\Lpinv}}{(n-1)\farnel{G}{v}}.
\end{align}
\end{lemma}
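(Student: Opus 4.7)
My plan is to reduce Eq.~(\ref{normalizerandomwalk}) to the target closed form by eliminating $\mat{M}^{-1}$ in favor of $\Lpinv$. First I would establish the identity $\mat{M}^{-1} = \Lpinv + \tfrac{1}{n}\mat{J}$: since $\mat{P} = \tfrac{1}{n}\mat{J}$ is the orthogonal projector onto the kernel of $\mat{L}$, one has $\mat{L}\mat{P} = \mat{P}\mat{L} = \mat{0}$ and $\mat{L}\Lpinv = \mat{I}-\mat{P}$, whence $\mat{M}\roundb{\Lpinv + \tfrac{1}{n}\mat{J}} = (\mat{I}-\mat{P})+\mat{P} = \mat{I}$. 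Substituting this identity entry by entry into Eq.~(\ref{normalizerandomwalk}), the additive constants $1/n$ cancel inside the numerator $\ment{M^{-1}}{t}{t}-\ment{M^{-1}}{t}{v}$ and pairwise inside the denominator $\ment{M^{-1}}{t}{t} + \ment{M^{-1}}{v}{v} - 2\ment{M^{-1}}{t}{v}$; the latter is then immediately recognizable as $\effres{t}{v}$ by Eq.~(\ref{eq:eff-res}).

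Having reduced the $t$-th summand to $(\ment{\Lpinv}{t}{t}-\ment{\Lpinv}{t}{v}) / \effres{t}{v}$, I would next collapse the sum using two basic identities for $\Lpinv$: its zero column-sum property yields $\sum_{t\neq v}\ment{\Lpinv}{t}{v} = -\ment{\Lpinv}{v}{v}$, while the definition of trace yields $\sum_{t\neq v}\ment{\Lpinv}{t}{t} = \trace{\Lpinv}-\ment{\Lpinv}{v}{v}$; subtracting the two leaves the clean total $\sum_{t\neq v}(\ment{\Lpinv}{t}{t} - \ment{\Lpinv}{t}{v}) = \trace{\Lpinv}$. Combined with the definition $\farnel{G}{v} = \sum_{t\neq v}\effres{t}{v}$ carried over from Eq.~(\ref{eq:elec-farness}), these aggregated quantities are exactly what appears on the right-hand side of the target formula.

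The main obstacle I anticipate is the aggregation step: replacing a sum of ratios by a single ratio of aggregated numerators and denominators is not a purely formal operation, so the proof must exploit additional structure implicit in Eq.~(\ref{normalizerandomwalk}), tracing back to the current-flow derivation in \citep{NarayanS18scaling} where the numerator and denominator arise as separate summations over intermediate transitions before their quotient is formed. A workable alternative would be to rewrite each summand via $\ment{\Lpinv}{t}{t}-\ment{\Lpinv}{t}{v} = \tfrac{1}{2}\effres{t}{v} + \tfrac{1}{2}(\ment{\Lpinv}{t}{t}-\ment{\Lpinv}{v}{v})$, split off the constant contribution, and reorganize the residual sum using the same zero-sum identities; this keeps the algebra elementary. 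Once this reconciliation is in place, combining the aggregated numerator $\trace{\Lpinv}$, the aggregated denominator $\farnel{G}{v}$, the normalization $1/(n-1)$, and the $1/n$ offset produces $c_b(v) = \tfrac{1}{n} + \tfrac{\trace{\Lpinv}}{(n-1)\farnel{G}{v}}$ as claimed, with no further algebra required.
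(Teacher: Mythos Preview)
Your approach is the same as the paper's: replace $\mat{M}^{-1}$ by $\Lpinv + \tfrac{1}{n}\mat{J}$, cancel the $\mat{P}$-contributions, collapse the numerator to $\trace{\Lpinv}$ via the zero-column-sum identity $\sum_{t\neq v}\ment{\Lpinv}{t}{v}=-\ment{\Lpinv}{v}{v}$, and recognize the denominator as $\farnel{G}{v}$ via Eq.~(\ref{eq:eff-res}). Regarding the aggregation concern you flag: the paper handles it exactly as you anticipate, treating Eq.~(\ref{normalizerandomwalk}) as a single ratio whose numerator $\sum_{t\neq v}(\ment{M^{-1}}{t}{t}-\ment{M^{-1}}{t}{v})$ and denominator $(n-1)\sum_{t\neq v}(\ment{M^{-1}}{t}{t}+\ment{M^{-1}}{v}{v}-2\ment{M^{-1}}{t}{v})$ are summed over $t$ separately, with no further justification beyond the reference to~\citep{NarayanS18scaling}; so your instinct to trace the interpretation back to that source is the correct resolution, and your alternative algebraic split is not needed (nor would it close the gap for a genuine sum-of-ratios reading).
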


Hence, since Algorithm~\ref{alg:approx-diag-lpinv} approximates the diagonal of $\Lpinv$
and both trace and electrical farness depend only on the diagonal,
the following proposition holds:

\begin{proposition}
  \label{prop:norma-betw}
  Let $G = (V, E)$ be a small-world graph as in Theorem~\ref{thm:time-complexity}. Then, Algorithm~\ref{alg:approx-diag-lpinv}
approximates with high probability $c_b(v)$ for all $v \in V$ with absolute error $\pm \epsilon$ in $\Oh(m \log^4 n \cdot \epsilon^{-2})$ time.
\end{proposition}

%%%%%%%%%%%%%%%%%%%%%%%%%%%%%%%%%%%%%%%%%%%%%%%%%%%%%%
\section{Engineering Aspects and Parallelization}
\label{sec:impl}
Important engineering decisions concern the choice of the UST sampling
algorithm, decomposition of the input graph into biconnected components,
selection of the pivot $u$, and the linear solver used for the initial
linear system. For these aspects, we refer the reader to
{Appendix~\ref{sub:general-engineering}.}

Our implementation uses OpenMP for shared memory and MPI+OpenMP
for distributed memory. We assume that the entire graph fits
into main memory (even in the distributed case).
Hence, we can parallelize Algorithm~\ref{alg:approx-diag-lpinv} to a large extent by sampling
and aggregating multiple USTs in parallel.
In particular, we turn the main sampling loop into
a \textbf{parallel for} loop.
We also solve the initial Laplacian system using a
shared-memory parallel Conjugate Gradient (CG) solver
(see Appendix~\ref{sub:general-engineering}).
Note that we do not employ parallelism in the other steps
of the algorithm. In particular, the BFS to compute $B_u$
is executed sequentially. We also do not parallelize over the
loops in Algorithm~\ref{alg:aggregate} to avoid nested parallelism
with multiple invocations of Algorithm~\ref{alg:aggregate}.
We note that, in contrast to the theoretical
work-depth model, solving the initial Laplacian system
and performing the BFS are not the main bottlenecks in practice.
Instead, sampling and aggregating USTs together consume the majority
of CPU time (see
\ifthenelse{\boolean{confversion}}
{Figure~5 in Appendix~E.1 of the full version~\cite{angriman2020approximation}).}
{Figure~\ref{fig:breakdown} in Appendix~\ref{sub:app:par-scal}.}
More details regarding shared and distributed memory,
in particular load balancing for the distributed case,
are discussed
{in Appendix~\ref{sec:app:parallelism}.}

%%%%%%%%%%%%%%%%%%%%%%%%%%%%%%%%%%%%%%%%%%%%%%%%%%%%%%
\section{Experiments}
\label{sec:experiments}
\subsection{Settings}
We conduct experiments to demonstrate the performance
of our approach compared to the state-of-the-art competitors.
Unless stated otherwise, we implemented all algorithms
in C++, using the NetworKit~\cite{DBLP:journals/netsci/StaudtSM16} graph APIs.
Our own algorithm is labelled \ust in the sections below.
All experiments were conducted on a cluster with 16 Linux machines, each
equipped with an Intel Xeon X7460 CPU (2 sockets, 12 cores each), and 192 GB of
RAM. To ensure reproducibility, all experiments were managed by the
SimexPal~\cite{angriman2019guidelines} software.
We executed our experiments on the graphs in
\ifthenelse{\boolean{confversion}}
{Tables 2, 3, 4, and 5}
{Tables~\ref{tab:networks-medium-gt}, \ref{tab:networks-medium},
\ref{tab:networks-large}, and \ref{tab:networks-cluster}}
\ifthenelse{\boolean{confversion}}
{(see Appendix~D.2 of the full version of this
paper~\cite{angriman2020approximation} for further details on input
graphs).}
{in Appendix~\ref{sub:app:instances}.}
All of them are unweighted and undirected. They have been downloaded from the
public KONECT~\cite{DBLP:conf/www/Kunegis13} repository and reduced to their
largest connected component.

\subparagraph{Quality Measures and Baseline}
To evaluate the diagonal approximation quality, we measure the maximum absolute
error ($\max_i \Lpinv_{ii} - \widetilde{\Lpinv_{ii}}$) on each instance, and we
take both the maximum and the arithmetic mean over all the instances.
Since for some applications~\cite{newman2018networks,okamoto2008ranking} a correct ranking of the
entries is more relevant than their scores, in our experimental
analysis we compare complete rankings of the elements of $\widetilde{\Lpinv}$.
Note that the lowest entries of $\Lpinv$ (corresponding to the vertices with highest
electrical closeness) are distributed on a significantly narrow interval.
Hence, to achieve an accurate electrical closeness ranking of the top $k$ vertices,
one would need to solve the problem with very high accuracy.
For this reason, all approximation algorithms we consider do not yield a precise
top-$k$ ranking, so that we (mostly) consider the complete ranking.

Using \texttt{pinv} in NumPy or Matlab as a baseline would be too
expensive in terms of time (cubic) and space (quadratic) on large graphs
(see
\ifthenelse{\boolean{confversion}}
{Appendix~D.2 of the full version~\cite{angriman2020approximation}}
{Appendix~\ref{sub:app:instances}}).
  Thus, as quality baseline we employ the LAMG
  solver~\cite{livne2012lean} (see also next paragraph)
  as implemented within \nwk~\cite{DBLP:conf/siamcsc/BergaminiWLM16} in
  our experiments (with $\lamgTol$ tolerance).
The results in
\ifthenelse{\boolean{confversion}}
{Table~6 in Appendix~E.4}
{Table~\ref{tab:lamg_pinv_table} in Appendix~\ref{app:baseline}}
indicate that
the diagonal obtained this way is sufficiently accurate.

\subparagraph{Competitors in Practice}
In practice, the fastest way to compute electrical closeness so far
is to combine a dimension reduction via the Johnson-Lindenstrauss
lemma~\cite{johnson1984extensions} (JLT) with a numerical solver.
In this context, Algebraic MultiGrid (AMG) solvers exhibit
better empirical running time than fast Laplacian
solvers with a worst-case guarantee~\cite{mavroforakis2015spanning}.
For our experiments we use JLT combined with LAMG~\cite{livne2012lean} (named \jltLamg);
the latter is an AMG-type solver for complex networks.
We also compare against a Julia implementation of JLT together with the fast
Laplacian solver proposed by Kyng \etal~\cite{kyng16}, for which a Julia implementation
is already available in the package \tool{Laplacians.jl}.\footnote{ \url{https://github.com/danspielman/Laplacians.jl}}
This solver generates a sparse approximate Cholesky
decomposition for Laplacian matrices with provable approximation guarantees in
$\Oh(m\log^3{n}\log(1/\epsilon))$ time; it is based purely on random
sampling (and does not make use of graph-theoretic concepts such
as low-stretch spanning trees or sparsifiers).
We refer to the above implementation as \jltKyng
throughout the experiments. For both \jltLamg and \jltKyng, we try different input error bounds
(they correspond to the respective numbers next to the method names in Figure~\ref{fig:quality-rt}).
This is a relative error, since these algorithms use numerical approaches with a
relative error guarantee, instead of an absolute one
(see
\ifthenelse{\boolean{confversion}}
{Appendix~E of the full version~\cite{angriman2020approximation}}
{Appendix~\ref{app:additional_experiments}}
for results in terms of different quality measures).

Finally, we compare against the diagonal estimators due to Bekas \etal~\cite{bekas2007est},
one based on random vectors and one based on Hadamard rows.
To solve the resulting Laplacian systems, we use LAMG in both cases.
In our experiments, the algorithms are referred to as \bekas and \bekasH, respectively.
Excluded competitors are discussed in
\ifthenelse{\boolean{confversion}}
{Appendix~D.1 (full version~\cite{angriman2020approximation}).}
{Appendix~\ref{sub:app:excluded}.}

\subsection{Running Time and Quality}
\label{exp:quality-results}
\begin{figure}
\centering
\begin{subfigure}[t]{\textwidth}
\centering
\includegraphics{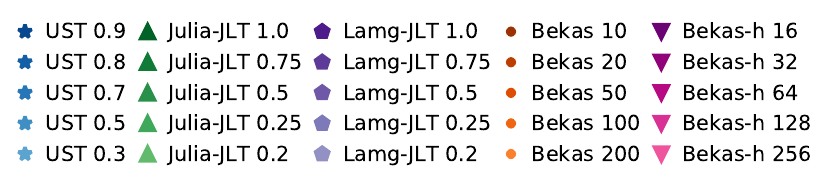}
\end{subfigure}

\begin{subfigure}[t]{.33\textwidth}
\includegraphics{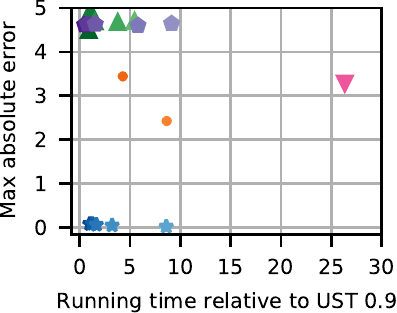}
\caption{Maximum of the maximum absolute errors.}
\label{fig:max-abs-err}
\end{subfigure}\hfill
\begin{subfigure}[t]{.33\textwidth}
\includegraphics{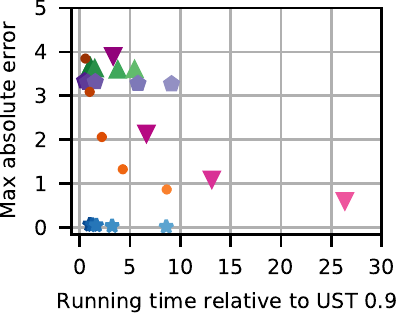}
\caption{Arithmetic mean of the maximum absolute error.}
\label{fig:avg-abs-err}
\end{subfigure}\hfill
\begin{subfigure}[t]{.33\textwidth}
\centering
\includegraphics{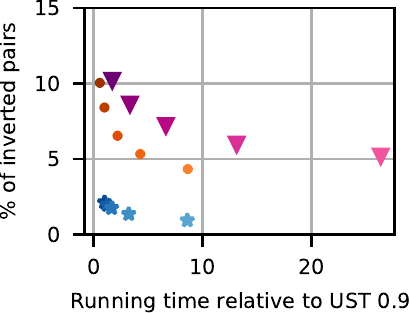}
\caption{Geometric mean of the percentage of inverted pairs in the full ranking of $\diag{\Lpinv}$.
}
\label{fig:full-ranking}
\end{subfigure}

\caption{Quality results over the instances of
\ifthenelse{\boolean{confversion}}{Table 2 (full version~\cite{angriman2020approximation})}{
Table~\ref{tab:networks-medium-gt}}.
All runs are sequential. }
\label{fig:quality-rt}
\end{figure}

Figure~\ref{fig:max-abs-err} shows that, in terms of maximum absolute error,
every configuration of \ust achieves results with higher quality than the
competitors. Even when setting $\epsilon = 0.9$, \ust yields a maximum absolute
error of \ustMaxAbsErr, and it is \ustBekasSpeedup faster than \bekas with 200
random vectors, which instead achieves a maximum absolute error of
\bekasMaxAbsErr.
Furthermore, the running time of \ust does not increase substantially for lower
values of $\epsilon$, and its quality does not deteriorate quickly for higher
values of $\epsilon$. Regarding the average of the maximum absolute error,
Figure~\ref{fig:avg-abs-err} shows that, among the competitors, \bekasH with
256 Hadamard rows achieves the best precision. However, \ust yields an average
error of \ustAvgAbsErr while also being \ustBekasHSpeedup faster than \bekasH,
which yields an average error of \bekasAvgAbsErr.
Note also that the number next to each method in Figure~\ref{fig:quality-rt} corresponds to
different values of absolute (for \ust) or relative (for \jltLamg, \jltKyng) error bounds, and different
numbers of samples (for \bekas, \bekasH). For \bekasH the number of samples needs to be a
multiple of four due to the dimension of Hadamard matrices.

In Figure~\ref{fig:full-ranking} we report the percentage of inverted pairs in
the full ranking of $\widetilde{\Lpinv}$.
Note that JLT-based approaches are not
  depicted in this plot, because they yield $>15\%$ of rank inversions.
Among the competitors, \bekas achieves the best time-accuracy trade-off. However,
when using 200 random vectors, it yields \bekasRanking inversions while also being
\ustBekasSpeedup slower than \ust with $\epsilon = 0.9$, which yields
\ustRanking inversions only.

For validation purposes, we also measure how well the considered
algorithms compute the \emph{set} (not the ranking) of top-$k$ vertices, \ie those with highest electrical
closeness centrality, with $k\in \{10, 100\}$. For each algorithm we only consider
the parameter settings that yields the highest accuracy.
JLT-based approaches appear to be
very accurate for this purpose, as their top-$k$ sets achieve a Jaccard
index of $1.0$.
As expected (due to its absolute error guarantee),
\ust performs slightly worse:
on average, it obtains \numprint{0.95} for $k = 10$ and \numprint{0.98} for $k = 100$, which still
shows a high overlap with the ground truth.

The memory consumption is shown and discussed in more detail
in \ifthenelse{\boolean{confversion}}{Appendix~E.3 (full version~\cite{angriman2020approximation})}{Appendix~\ref{sec:memory}}.
In summary, as our algorithm can discard each UST after its aggregation, it is
rather space-efficient and requires less memory than the competitors.

\subsection{Parallel Scalability}
\label{exp:parallel-scalability}
The log-log plot in \ifthenelse{\boolean{confversion}}{Figure 4a
(Appendix~E.1, full version~\cite{angriman2020approximation})}{
Figure~\ref{fig:shared-mem-scalability} (Appendix~\ref{sub:app:par-scal})}
shows that on shared-memory \ust achieves a moderate parallel scalability \wrt
the number of cores; on 24 cores in particular it is \shmemspeedup faster than
on a single core. Even though the number of USTs to be sampled can be evenly
divided among the available cores, we do not see a nearly-linear scalability:
on multiple cores the memory performance of our NUMA system becomes a
bottleneck. Therefore, the time to sample a UST increases and using more cores
yields diminishing returns. Limited memory bandwidth is a known issue affecting
algorithms based on graph traversals in
general~\cite{bader2005architectural,lumsdaine2007challenges}. Finally, we
compare the parallel performance of \ust indirectly with the parallel
performance of our competitors. More precisely, assuming a perfect parallel
scalability for our competitors \bekas and \bekasH on 24 cores, \ust would
yield results \numprint{4.1} and \numprint{12.6} times faster, respectively,
even with this strong assumption for the competition's benefit.

\ust scales better in a distributed setting.
In this case, the scalability is affected mainly by its non-parallel parts
and by synchronization latencies.
The log-log plot in
\ifthenelse{\boolean{confversion}}{Figure~4b}{Figure~\ref{fig:distr-mem-scalability}}
shows that on up to 4 compute nodes the scalability is almost linear, while on 16
compute nodes \ust achieves a \distrspeedup speedup \wrt a single compute node.

\ifthenelse{\boolean{confversion}}
{Figure~5 (Appendix~E.1, full version~\cite{angriman2020approximation})}
{Figure~\ref{fig:breakdown} in Appendix~\ref{sub:app:par-scal}}
shows the fraction of time that \ust
spends on different tasks depending on the number of cores. We aggregated over
\quot{Sequential Init.} the time spent on memory allocation, pivot selection,
solving the linear system, the computation of the biconnected components,
and on computing the tree $B_u$.
In all configurations, \ust spends the majority of the time in sampling, computing the DFS data structures and
aggregating USTs. The total time spent on aggregation corresponds to
\quot{UST aggregation} and \quot{DFS} in
\ifthenelse{\boolean{confversion}}{Figure~5}{Figure~\ref{fig:breakdown}},
indicating that computing the DFS data structures is the most expensive part of
the aggregation.
Together, sampling time and total aggregation time account for \ustFracOneCore and
  \ustFracAllCores of the total running time on 1 core and 24 cores, respectively.
  On average, sampling takes \avgSampling of this time, while total aggregation
  takes \avgAggregate. Since sampling a UST is on average \aggOverSampl more
  expensive than computing the DFS timestamps and aggregation, faster sampling techniques would significantly
  improve the performance of our algorithm.

\begin{figure}
\centering
\begin{subfigure}[t]{.5\textwidth}
\centering
\includegraphics{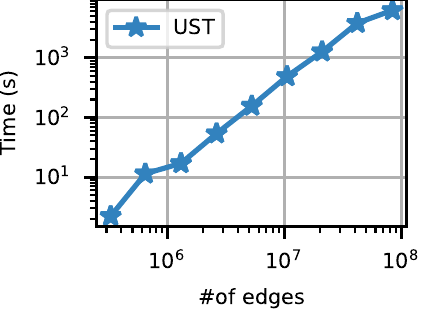}
\caption{Running time of \ust \wrt \#of edges.}
\label{fig:hyp-time}
\end{subfigure}\hfill
\begin{subfigure}[t]{.5\textwidth}
\centering
\includegraphics{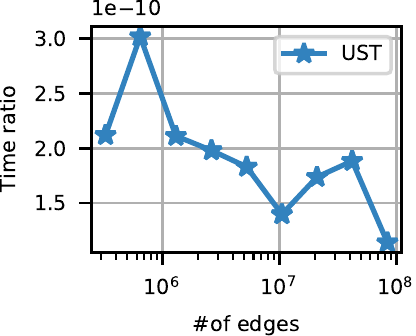}
\caption{Ratio of running time of \ust \wrt its theoretical running time
(see Theorem~\ref{thm:time-complexity}).}
\label{fig:hyp-ratio}
\end{subfigure}
\caption{Scalability of \ust on random hyperbolic graphs ($\epsilon = 0.3$, $1\times24$ cores).}
\label{fig:hyp}
\end{figure}

\subsection{Scalability to Large Networks}
\subparagraph{Results on Synthetic Networks}

The log-log plots in Figure~\ref{fig:hyp-time} show the average running
time of \ust ($1 \times 24$ cores) on networks generated with the random
hyperbolic generator from von Looz
\etal~\cite{DBLP:conf/hpec/LoozOLM16}.\footnote{The random hyperbolic generator
generates networks with a heavy-tailed degree distribution. We set the
average degree to $20$ and the exponent of the power-law distribution to $3$.}
For each network size, we take the arithmetic mean of the running times measured
for five different randomly generated networks.
Our algorithm requires \maxTimeHyp minutes for the largest inputs
(with up to \maxEdgesHyp million edges).
Interestingly, Figure~\ref{fig:hyp-ratio} shows that the algorithm scales slightly
better than our theoretical bound predicts.
In \ifthenelse{\boolean{confversion}}{Figure~6 (Appendix~E.2, full
version~\cite{angriman2020approximation})}{ Figure~\ref{fig:rmat}
(Appendix~\ref{sub:rmat})} we present results on an additional graph class,
namely R-MAT graphs.
On these instances, the algorithm exhibits a similar running time
behavior; however, the comparison to the theoretical bound
is less conclusive.

\subparagraph{Results on Large Real-World Networks}
\begin{table}[tb]
\small
\setlength{\tabcolsep}{\instTabColSep}
\centering
\caption{Running time of \ust on large real-world networks ($16\times24$ cores).}
\label{tab:cluster-only-time}
\begin{tabular}{lrrrr}
\multirow{2}{*}{Network} & \multirow{2}{*}{$|V|$} & \multirow{2}{*}{$|E|$} & Time (s)  & Time (s) \\
& & & $\epsilon = 0.3$ & $\epsilon = 0.9$ \\
\midrule
petster-carnivore & \numprint{601213} & \numprint{15661775} & \numprint{16.8} & \numprint{4.8}\\
soc-pokec-relationships & \numprint{1632803} & \numprint{22301964} & \numprint{55.5} & \numprint{9.5}\\
soc-LiveJournal1 & \numprint{4843953} & \numprint{42845684} & \numprint{277.0} & \numprint{75.5}\\
livejournal-links & \numprint{5189808} & \numprint{48687945} & \numprint{458.4} & \numprint{80.6}\\
orkut-links & \numprint{3072441} & \numprint{117184899} & \numprint{71.8} & \numprint{19.9}\\
wikipedia\_link\_en & \numprint{13591759} & \numprint{334590793} & \numprint{429.9} & \numprint{88.3}\\
\midrule
\end{tabular}
% Generated on: 20/04/2020 22:31:22

\end{table}

In Table~\ref{tab:cluster-only-time} we report the performance of \ust
in a distributed setting ($16\times 24$ cores) on large \emph{real-world} networks.
With $\epsilon = 0.3$ and $\epsilon = 0.9$, \ust always runs in less than
$8$ minutes and $\numprint{1.5}$ minutes, respectively.

%%%%%%%%%%%%%%%%%%%%%%%%%%%%%%%%%%%%%%%%%%%%%%%%%%%%%%
\section{Conclusions}
\label{sec:conclusions}
We have proposed a new parallel algorithm for approximating $\diag{\Lpinv}$
of Laplacian matrices $\mat{L}$ corresponding to small-world networks.
Compared to the main competitors, our algorithm is about one order of magnitude faster,
it yields results with higher quality in terms of absolute error and
ranking of $\diag{\Lpinv}$, and it requires less memory.
The gap between the theoretical bounds and the much better empirical error yielded
    by our algorithm suggests that tighter bounds on the number of samples are a
    promising direction for future work.
So is an improvement of the running time for high-diameter graphs, both in theory
and practice.

\paragraph*{Acknowledgements}
This work is partially supported by German Research Foundation (DFG) grant ME 3619/3-2
within Priority Programme 1736 \textit{Algorithms for Big Data} and by DFG grant
ME 3619/4-1 (\textit{Accelerating Matrix Computations for Mining Large Dynamic Complex Networks}).
We thank our colleague Fabian Brandt-Tumescheit for his technical support for the experiments.

\newpage

%%%%%%%%%%%%%%%%% references %%%%%%%%%%%%%%%%%%%%%%%%%
\bibliography{references}

\newpage

\appendix

\section{Algorithmic Details and Omitted Proofs}

\subsection{Our Approximation Algorithm in More Detail}
\label{app:approx-algo}
\subparagraph{Overall algorithm}
Algorithm~\ref{alg:approx-diag-lpinv} already receives the pivot vertex $u$ as input.
Lines~\ref{line:start-init} to~\ref{line:end-sampling} approximate the effective resistances.
To do so, Lines~\ref{line:start-init} to~\ref{line:end-init} perform initializations:
first, the estimate of the effective resistance is set to $0$ for all vertices. Then the accuracy
$\eta$ of the linear solver is computed so as to ensure an absolute $\epsilon$-approximation for
the whole algorithm. The BFS tree $B_u$ with root $u$ realizes shortest paths
between $u$ and all other vertices. The sample size $\tau$ depends on the parameters $\epsilon$
and $\delta$, among others.

The first \texttt{for}-loop does the actual sampling and aggregation (the latter with Algorithm~\ref{alg:aggregate}).
Afterwards, Lines~\ref{line:start-fill} to~\ref{line:end-fill} fill the $u$-th column and the diagonal of $\Lpinv$ -- to the desired accuracy. Apart from
that, the algorithm's high-level ideas have been provided already in Section~\ref{sub:overview}.

\begin{algorithm}[h!]
  \begin{algorithmic}[1]
    \begin{small}
    \Function{ApproxDiagLpinv}{$G$, $u$, $\epsilon$, $\delta$}
    \State \textbf{Input:} Undirected small-world graph $G = (V, E)$,
      pivot $u \in V$, error bound $\epsilon > 0$, probability $0 < \delta < 1$
    \State \textbf{Output:} $\diag{\widetilde{\Lpinv}}$, \ie an $(\epsilon, \delta)$-approximation of $\diag{\Lpinv}$
    \State $R[v] \gets 0 ~\forall v \in V$ \label{line:start-init}
    \Comment {\small $\Oh(n)$}
    \State Pick constant $\kappa \in (0, 1)$ arbitrarily;
		$\eta \gets \frac{\kappa \epsilon}{3 \sqrt{mn \log n} \diam{G}}$ \label{line:eta}
    \State Compute BFS tree $B_u$ of $G$ with root $u$
    \Comment {\small $\Oh(n + m)$}
    \State $\tau \gets \ecc(u)^2 \cdot \lceil \log(2m/\delta) / (2(1 - \kappa)^2 \epsilon^2) \rceil$
    \Comment {\small $\Oh(1)$} \label{line:end-init}
    \For{$i \gets 1$ to $\tau$}\label{line:ust-sampling-loop}
      \Comment {\small $\tau$ times}
      \State Sample UST $T_i$ of $G$ with root $u$
      \Comment{\small $\Oh(m \log n)$}\label{line:ust-sampling}
      \State $R \gets$ \textsc{Aggregate}($T_i$, $R$, $B_u$)
      \Comment {\small $\Oh(n \log n)$}
    \EndFor \label{line:end-sampling}
    \State Solve $\mat{L} \myvec{x} = \uvec{u} - \frac{1}{n}\cdot \onesvec$ for $\myvec{x} \perp \onesvec$ {\footnotesize (accuracy: $\eta$)} \label{line:start-fill}
    \Comment{\small $\tilde{\Oh}(m \log^{1/2} n \log(1/\eta))$} \label{line:linear-system}
    \State $\ment{\widetilde{\Lpinv}}{u}{u} \gets \vent{x}{u}$
    \Comment {\small $\Oh(1)$}
    \For{$v \in V'$}
    \Comment {\small All iterations: $\Oh(n)$}
      \State $\ment{\widetilde{\mat{L}^\dagger}}{v}{v} \gets R[v] / \tau - \vent{x}{u} + 2 \vent{x}{v}$ \label{line:final-aggregation}
    \EndFor \label{line:end-fill}
    \State \textbf{return} $\operatorname{diag}(\widetilde{\Lpinv})$
    \EndFunction
    \end{small}
  \end{algorithmic}
  \caption{Approximation algorithm for $\diag{\Lpinv}$}
  \label{alg:approx-diag-lpinv}
\end{algorithm}

\begin{remark}
Due to the fact that Laplacian linear solvers provide
a relative error guarantee (and not an absolute
$\pm \epsilon$ guarantee), the (relative)
accuracy $\eta$ for the initial Laplacian linear
system (Lines~\ref{line:eta} and~\ref{line:linear-system})
depends in a non-trivial way on our
guaranteed absolute error $\epsilon$.
For details, see the proofs in Appendix~\ref{sub:app-thm-time}.

We also remark that the value of the constant $\kappa$ does
not affect the asymptotic running time (nor the correctness)
of the algorithm. However, it does affect the empirical
running time by controlling which fraction of the
error budget is invested into solving the initial linear system
vs. UST sampling.
\end{remark}

\if #0
\begin{remark}
  
  \changed{In Line~\ref{line:start-fill}, we require that the solution vector $\myvec{x}$
  of $\mat{L} \myvec{x} = \uvec{u} - \frac{1}{n}\cdot \onesvec$ is perpendicular
  to $\onesvec$, so that $\myvec{x} = \ment{\Lpinv}{:}{u}$ holds. In
  practice we solve an equivalent linear system instead, namely:
  \begin{align}
    \label{eqlinsys}
    (\mat{L} + \frac{1}{n} \mat{J})\myvec{y} & = \uvec{u} - \frac{1}{n}\cdot \onesvec,
  \end{align} where $\mat{J}$ is the $n \times n$-matrix
  with all entries being $1$ and $\myvec{y}$ is the unique solution, $\myvec{y} \in \mathbb{R}^n$. 
  In the following, we prove the above equivalence by showing that
  $\myvec{y} = \ment{\Lpinv}{:}{u}$ for $\myvec{y} \in \mathbb{R}^n$
  and thus $\myvec{y} = \myvec{x}$ for $\myvec{x} \perp \onesvec$.
  }
\changed{\begin{proposition}
  \label{corlinsys}
  The solution vector $\myvec{y} \in \mathbb{R}^n$ of Eq.~(\ref{eqlinsys}) is equal to
  $\ment{\Lpinv}{:}{u}$.
\end{proposition}
}
\changed{\begin{proof} % (of Proposition~\ref{corlinsys})
  Multiplying both parts of Eq.~(\ref{eqlinsys}) with $\Lpinv + \frac{1}{n} \mat{J}$ we get:
\begin{align}
   (\Lpinv + \frac{1}{n} \mat{J})\cdot (\mat{L} + \frac{1}{n}\mat{J})\myvec{x}
  & = (\Lpinv + \frac{1}{n} \mat{J}) \cdot (\uvec{u} - \frac{1}{n} \cdot \onesvec ) \\
  (\Lpinv \mat{L} + \frac{1}{n} \mat{J} \Lpinv + \frac{1}{n} \mat{J}  \mat{L} + \frac{1}{n^2} \mat{J^2}) \myvec{x}
  & = \Lpinv \uvec{u} - \Lpinv \frac{1}{n}\cdot \onesvec + \frac{1}{n}\mat{J} \uvec{u} -  \frac{1}{n} \mat{J} \cdot \onesvec \label{idem}\\
  ((\mat{I} -\frac{1}{n} \mat{J}) + \mat{O} + \mat{O} + \frac{1}{n} \mat{J}) \myvec{x}
  & = \Lpinv \uvec{u} - \myvec{0} + \frac{1}{n}\cdot \onesvec -  \frac{n}{n^2} \onesvec \\
  \myvec{x}
  & = \Lpinv \uvec{u}, 
\end{align}
where $\mat{O}$ corresponds to the zero matrix and $\myvec{0}$ to the zero vector. Resulting
in zero matrix (or vector) derives from the fact that the row/column sums of both $\mat{L}$
and $\Lpinv$ are equal to zero. In Eq.~(\ref{idem}) we used the fact that $\mat{J^2} = n\mat{J}$. %(idempotent property).  
\end{proof}
}
\end{remark}
\fi

%%%%%%%%%%%%%%%%%%%%%%%%%%%

\subsubsection{Aggregation algorithm}
\begin{algorithm}[h!]
  \begin{algorithmic}[1]
    \begin{small}
    \Function{Aggregate}{$T, R, B_u$}
    \State \textbf{Input:} spanning tree $T$,
      array of effective resistance estimates $R$, shortest-path tree $B_u$
    \State \textbf{Output:} $R$ updated with $T$'s contribution
    \State $\{\alpha, \Omega\} \gets DFS(T)$
        \Comment{$\alpha(v)$, $\Omega(v)$: discovery/finish times of $v$}
        \label{line:tree-dfs}
    \For{$v \in V'$}
        \For{$(a, b) \in P(v) \text{ obtained from } B_u$}
        \If{$parent(b) = a$}%
            \label{line:tree-sign-w}
                \If{$\alpha(b) < \alpha(v)$ \textbf{and} $\Omega(v) < \Omega(b)$}%
\label{line:tree-desc-w}
                    \State $R[v] \gets R[v] + 1$
\label{line:tree-add}
                \EndIf
                \ElsIf{$parent(a) = b$}%
\label{line:tree-sign-wprime}

                \If{$\alpha(a) < \alpha(v)$ \textbf{and} $\Omega(v) < \Omega(a)$}%
\label{line:tree-desc-wprime}
                    \State $R[v] \gets R[v] - 1$
\label{line:tree-sub}
                \EndIf
            \EndIf
        \EndFor
    \EndFor
    \State \textbf{return} $R$
    \EndFunction
    \end{small}
  \end{algorithmic}
  \caption{Aggregation of $T$'s contribution to $R[\cdot]$}
  \label{alg:aggregate}
\end{algorithm}

Algorithm~\ref{alg:aggregate} depicts the pseudocode
of the tree aggregation algorithm
that is discussed in Section~\ref{sub:tree-sampling}.
Here, $\alpha(\cdot)$ and $\Omega(\cdot)$ denote our
DFS discovery and finish timestamps, respectively.
The test whether $(a, b)$ [or $(b, a)$] is in $T$ is carried
out in Line~\ref{line:tree-sign-w} [or Line~\ref{line:tree-sign-wprime}, respectively].
If that is indeed the case,
Line~\ref{line:tree-desc-w} [or Line~\ref{line:tree-desc-wprime}] checks
whether $v$ is below $(a, b)$ [or $(b, a)$, respectively]. If that is the case,
the effective resistance estimate is adapted in Line~\ref{line:tree-add}
[Line~\ref{line:tree-sub}].

\subsubsection{Parallelism}
\label{sub:app:parallelism}
Algorithm~\ref{alg:approx-diag-lpinv} can be parallelized by
sampling and aggregating USTs in parallel.
This yields a work-efficient parallelization in the work-depth model.
The depth of the algorithm is dominated
by (i) computing the BFS tree $B_u$,
(ii) sampling each UST (Line~\ref{line:ust-sampling})
and (iii) solving the Laplacian linear system (Line~\ref{line:linear-system}).
With current algorithms, the latter two procedures have
depth $\Oh(m \log n)$ and $\tilde{\Oh}(m \log^{\frac 12} n \log(1/{\eta}))$, respectively
(simply by executing them sequentially).
We note that parallelizing the loops of Algorithm~\ref{alg:aggregate}
results in a depth of $\Oh(n)$ for Algorithm~\ref{alg:aggregate};
however, this does
not impact the depth of Algorithm~\ref{alg:approx-diag-lpinv}.
We also note that by using a parallel Laplacian solver,
the depth of solving the initial linear system becomes
polylogarithmic~\cite{peng2013efficient}.
Nevertheless,
real-world implementations show a good
parallelization behavior by parallelizing only Algorithm~\ref{alg:approx-diag-lpinv}
(see Sections~\ref{sec:impl} and~\ref{sec:experiments});
consequently, we do not focus on parallelizing the sampling itself.

\subsection{Wilson's UST Algorithm}
\label{app:Wilson}
Given a path $P$, its loop erasure is a simple path created by removing all cycles of $P$ in
chronological order. Wilson's algorithm grows a sequence of sub-trees of $G$, in our case
starting with $u$ as root of $T$. Let $M = \{v_1 ,\ldots, v_{n-1}\}$ be an enumeration of
$V\setminus \{u\}$. Following the order in $M$, a random walk starts from every unvisited $v_i$
until it reaches (some vertex in) $T$ and its loop erasure is added to $T$.
\begin{proposition}[\cite{Wilson:1996:GRS:237814.237880}, comp.~\cite{Hayashi2016EfficientAF}]
\label{thm:algo-Wilson}
For a connected and unweighted undirected graph $G = (V, E)$ and a vertex $u \in V$,
Wilson's algorithm samples a uniform spanning tree of $G$
with root $u$. The expected running time is
the mean hitting time of $G$, $\sum_{v \in V'} \pi_G(v) \kappa_G(v,u)$, where $\pi_G(v)$ is the probability that a random
walk stays at $v$ in its stationary distribution and where $\kappa_G(v,u)$ is the commute
time between $v$ and $u$.
\end{proposition}
\begin{lemma}
\label{lem:ust-time}
Let $G$ be as in Proposition~\ref{thm:algo-Wilson}. Its mean hitting time
can be rewritten as $\sum_{v \in V'} \operatorname{deg}(v) \cdot \effres{u}{v}$, which is $\Oh(\ecc(u) \cdot m)$.
In small-world graphs, this is $\Oh(m \log n)$.
\end{lemma}
\begin{proof} (of Lemma~\ref{lem:ust-time})
First, we replace $\pi_G(v)$ by $\frac{\deg(v)}{\operatorname{vol}(G)}$ in the sum~\cite{Lovasz1996}.
By using the well-known relation $\kappa_G(v,u) = \operatorname{vol}(G) \cdot \effres{v}{u}$~\cite{chandra1996electrical},
the volumes cancel and we obtain $\sum_{v \in V'} \deg(v) \cdot \effres{u}{v}$.
We can bound this from above by $\sum_{v \in V'} \deg(v) \cdot \dist{u, v} \leq \operatorname{vol}(G) \cdot \ecc(u)$,
because effective resistance is never larger than the graph distance~\cite{ericson2013effective}.
In unweighted graphs, $\operatorname{vol}(G) = 2m$ and in undirected graphs $\ecc(u) \leq \diam{G}$ for all $u \in V$,
which proves the claim.
\end{proof}

%%%%%%%%%%%%%%%

\subsection{Proof of Lemma~\ref{lem:aggregation-time}}
\label{sub:app-lemma-agg-time}
\begin{proof} (of Lemma~\ref{lem:aggregation-time})
DFS in $T$ takes $\Oh(n)$ time since $T$ is a spanning tree.
Furthermore, Algorithm~\ref{alg:aggregate} loops
over $\Oh(\farnc{G}{u})$ vertex-edge pairs (as $|P(v)| = \dist{u,v}$),
with $\Oh(1)$ query time spent per pair.
Since no path to the root in $B_u$ is longer than $\ecc(u)$,
we obtain $\Oh(n \cdot \ecc(u))$, which is by definition $\Oh(n \cdot \diam{G})$.
\end{proof}

\subsection{Proof of Theorem~\ref{thm:time-complexity}}
\label{sub:app-thm-time}
The proof of our main theorem makes use of Hoeffding's inequality.
In the inequality's presentation, we follow Hayashi \etal~\cite{Hayashi2016EfficientAF}.
\begin{lemma}
\label{lem:hoeffding}
Let $X_1, \dots, X_\tau$ be independent random variables in $[0, 1]$ and
$X = \sum_{i \in [\tau]} X_i$. Then for any $0 < \epsilon < 1$, we have
\begin{equation}
Pr[|X - \mathbb{E}[X]| > \epsilon \tau] \leq 2 \exp(-2 \epsilon^2 \tau).
\end{equation}
\end{lemma}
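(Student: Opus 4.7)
The plan is to follow the standard Chernoff-style argument: exponentiate the deviation, apply Markov's inequality, exploit independence to factor the moment generating function (MGF), bound each factor using the boundedness of $X_i$, and finally optimize over the free parameter.

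First, I would handle the one-sided tail $\Pr[X - \E[X] > \epsilon\tau]$. For any $s > 0$, Markov's inequality applied to $e^{s(X - \E[X])}$ gives
\begin{equation}
\Pr[X - \E[X] > \epsilon\tau] \;\leq\; e^{-s\epsilon\tau}\,\E\!\left[e^{s(X-\E[X])}\right].
\end{equation}
Independence of the $X_i$ lets me factor the MGF as $\prod_{i=1}^{\tau} \E[e^{s(X_i - \E[X_i])}]$. The next step is to bound each factor. Each centered variable $Y_i := X_i - \E[X_i]$ lies in an interval of length at most $1$ (since $X_i \in [0,1]$) and has mean zero. The key lemma (Hoeffding's lemma) states that any mean-zero random variable supported on an interval of length $L$ satisfies $\E[e^{sY}] \le e^{s^2 L^2 / 8}$; I would cite or briefly derive it by noting that $e^{sy}$ on $[a,b]$ is dominated by the chord, taking the expectation, and then bounding the resulting function of $s$ by $e^{s^2(b-a)^2/8}$ using a Taylor expansion of its logarithm. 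With $L \le 1$, this yields $\E[e^{sY_i}] \le e^{s^2/8}$ for each $i$.

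Combining these pieces gives
\begin{equation}
\Pr[X - \E[X] > \epsilon\tau] \;\leq\; e^{-s\epsilon\tau}\,e^{s^2\tau/8}.
\end{equation}
I would then optimize the right-hand side over $s>0$; elementary calculus gives the minimizer $s^* = 4\epsilon$, producing the bound $e^{-2\epsilon^2\tau}$. The symmetric tail $\Pr[\E[X] - X > \epsilon\tau]$ is handled by applying the identical argument to the variables $1 - X_i \in [0,1]$ (or equivalently $-Y_i$), yielding the same bound. A union bound over the two tails introduces the factor of $2$, completing the proof.

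The only technically delicate step is Hoeffding's lemma itself (the MGF bound for a bounded mean-zero variable); everything else is a mechanical Chernoff-Markov computation and an optimization in one variable. Since Hoeffding's lemma is a classical inequality, I would either cite a standard reference or include its short derivation via convexity of $y \mapsto e^{sy}$ and a second-order Taylor expansion of the cumulant generating function of a Bernoulli-like worst case on the endpoints of the support interval.
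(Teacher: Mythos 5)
Your proposal is correct: it is the standard Chernoff--MGF derivation of Hoeffding's inequality (Markov applied to $e^{s(X-\E[X])}$, factorization by independence, Hoeffding's lemma $\E[e^{sY_i}]\le e^{s^2/8}$ for mean-zero variables in an interval of length at most $1$, optimization at $s^*=4\epsilon$ giving $e^{-2\epsilon^2\tau}$, and a union bound over the two tails), and your constants check out against the stated bound. Note that the paper does not prove this lemma at all --- it invokes it as the classical Hoeffding inequality, with the presentation following Hayashi et al. --- so your self-contained argument simply reproduces the standard proof underlying that citation and is fully consistent with it.
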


Before we can prove Theorem~\ref{thm:time-complexity}, we
need auxiliary results on the equivalence of norms.
For this purpose,
let $\Vert \myvec{x} \Vert_{\mat{L}} := \sqrt{\myvec{x}^T \mat{L} \myvec{x}}$
for any $\myvec{x} \in \mathbb{R}^n$. Note that $\Vert \cdot \Vert_{\mat{L}}$
is a norm on the subspace of $\mathbb{R}^n$ with $\myvec{x} \perp \myvec{1}$. We show that:

\begin{lemma}
\label{lem:norms}
Let $G = (V, E)$ be a connected undirected graph with $n$ vertices and $m$ edges. Moreover, let $\mat{L}$ be its Laplacian matrix
and $\lambda_2$ the second smallest eigenvalue of $\mat{L}$.
The volume of $G$, $\operatorname{vol}(G)$,
is the sum of all (possibly weighted) vertex degrees.

For any $\myvec{x} \in \mathbb{R}^n$ with $\myvec{x} \perp \myvec{1}$ we have:
\begin{equation}
\label{eq:norm-equiv}
 \sqrt{\lambda_2} \cdot \Vert \myvec{x} \Vert_\infty \leq \Vert \myvec{x} \Vert_{\mat{L}} \leq \sqrt{2\operatorname{vol}(G)} \cdot \Vert \myvec{x} \Vert_{\infty}.
\end{equation}
\end{lemma}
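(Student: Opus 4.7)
The plan is to prove the two inequalities separately, each following from a standard identity or estimate applied carefully.

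For the lower bound, I would invoke the variational (Courant--Fischer) characterization of $\lambda_2$. Since $\mat{L}$ is symmetric positive semi-definite with $\mat{L}\myvec{1} = \myvec{0}$, its smallest eigenvalue is $0$ with eigenvector $\myvec{1}$, and the Rayleigh quotient restricted to the orthogonal complement is bounded below by $\lambda_2$. Hence for any $\myvec{x} \perp \myvec{1}$ we have $\myvec{x}^T \mat{L}\myvec{x} \geq \lambda_2 \|\myvec{x}\|_2^2$. Combining this with the elementary inequality $\|\myvec{x}\|_\infty \leq \|\myvec{x}\|_2$ and taking square roots yields the left inequality of Eq.~(\ref{eq:norm-equiv}). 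Note that connectedness of $G$ enters implicitly here: it guarantees $\lambda_2 > 0$, so that the bound is non-trivial.

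For the upper bound, I would use the edge-based expression
\begin{equation*}
  \myvec{x}^T \mat{L} \myvec{x} = \sum_{\{u,v\} \in E} \vent{w}{u,v}\,(\vent{x}{u} - \vent{x}{v})^2.
\end{equation*}
Each squared difference is bounded by $(2\|\myvec{x}\|_\infty)^2 = 4\|\myvec{x}\|_\infty^2$. Summing the weights over all edges gives $\sum_{\{u,v\} \in E} \vent{w}{u,v} = \operatorname{vol}(G)/2$, since each edge contributes to the (weighted) degree of both endpoints. Thus $\myvec{x}^T \mat{L}\myvec{x} \leq 4\|\myvec{x}\|_\infty^2 \cdot \operatorname{vol}(G)/2 = 2\operatorname{vol}(G)\|\myvec{x}\|_\infty^2$, and taking square roots yields the right inequality.

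Neither direction requires the orthogonality $\myvec{x} \perp \myvec{1}$ for the upper bound, but it is essential for the lower bound to exclude the kernel of $\mat{L}$. There is no real obstacle here; the only subtlety is bookkeeping in the upper bound (that $\sum_{e} w_e$ equals $\operatorname{vol}(G)/2$ and not $\operatorname{vol}(G)$, so the factor $2$ in $\sqrt{2 \operatorname{vol}(G)}$ is tight as stated). The whole argument is a short chain of standard spectral and combinatorial estimates, so the write-up should be only a few lines.
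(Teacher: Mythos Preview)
Your argument is correct and essentially identical to the paper's proof: both bound $\Vert \myvec{x} \Vert_\mat{L}$ from below via Courant--Fischer combined with $\Vert \myvec{x} \Vert_\infty \leq \Vert \myvec{x} \Vert_2$, and from above via the edge-sum form of the quadratic form together with $(\vent{x}{u}-\vent{x}{v})^2 \leq 4\Vert \myvec{x}\Vert_\infty^2$ and $\sum_{e} \vent{w}{e} = \operatorname{vol}(G)/2$. The only cosmetic difference is that the paper phrases the lower bound through a factorization $\mat{L} = \mat{K}^T\mat{K}$ before invoking Courant--Fischer, whereas you apply the Rayleigh-quotient bound directly.
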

\begin{proof}
Since $\mat{L}$ is positive semidefinite, it can be seen as a Gram matrix
and written as $\mat{K}^T \mat{K}$ for some real matrix $\mat{K}$. The second smallest eigenvalue
of $\mat{K}$ is then $\sqrt{\lambda_2}$ and we can write:
\begin{equation}
\label{eq:first-part-norms}
  \sqrt{\lambda_2} \cdot \Vert \myvec{x} \Vert_\infty \leq
  \sqrt{\lambda}_2 \cdot \Vert \myvec{x} \Vert_2 =
  \Vert \sqrt{\lambda}_2 \cdot \myvec{x} \Vert_2 \leq
  \Vert \mat{K} \myvec{x} \Vert_2 = \Vert \myvec{x} \Vert_\mat{L}.
\end{equation}
The first, second, and last (in)equality in Eq.~(\ref{eq:first-part-norms})
follow from basic linear algebra facts, respectively.
The third inequality follows from the Courant-Fischer theorem,
since the eigenvector corresponding to the smallest eigenvalue $0$, $\myvec{1}$, is excluded
from the subspace of $\myvec{x}$ (comp. for example Ch.~3.1 of Ref.~\cite{10.5555/975545}.)

Using the quadratic form of the Laplacian matrix, we get:
\begin{align}
  \Vert \myvec{x} \Vert_{\mat{L}} & = \left( \sum_{\{i,j\} \in E} \myvec{w}(u,v) (\vent{x}{i} - \vent{x}{j})^2 \right)^{1/2} \\
  & \leq \left( \frac{1}{2} \operatorname{vol}(G) \cdot (2 \Vert \myvec{x} \Vert_{\infty})^2   \right)^{1/2} = \sqrt{2\operatorname{vol}(G)} \cdot \Vert \myvec{x} \Vert_\infty
\end{align}
\end{proof}

We are now in the position to prove our main result:

\begin{proof} (of Theorem~\ref{thm:time-complexity})
Solving the initial linear system with the solver
by Cohen \etal~\cite{CohenKyng14}
takes $\tilde{\Oh}(m \log^{1/2} n \cdot \log (1/\eta))$ time to achieve a relative error bound
of $\Vert \tilde{\myvec{x}} - \myvec{x} \Vert_{\mat{L}} \leq \eta \Vert \myvec{x} \Vert_\mat{L}$. Here, $\myvec{x}$
is the true solution, $\tilde{\myvec{x}}$ the estimate, and
$\Vert \myvec{x} \Vert_{\mat{L}} = \sqrt{\myvec{x}^T \mat{L} \myvec{x}}$.
To make this error bound compatible with the absolute error we pursue,
we first note that $\sqrt{\lambda_2} \cdot \Vert \tilde{\myvec{x}} \Vert_\infty \leq \Vert \tilde{\myvec{x}} \Vert_{\mat{L}} \leq \sqrt{2 \operatorname{vol}(G)} \cdotp \Vert \tilde{\myvec{x}} \Vert_\infty$
(Lemma~\ref{lem:norms}), where $\lambda_2$ is the second smallest eigenvalue of $\mat{L}$.
We may use Lemma~\ref{lem:norms}, as
$\tilde{\myvec{x}}$ and $\myvec{x}$ are both perpendicular to $\myvec{1}$ (since the
image of $\mat{L}$ is perpendicular to its kernel, which is $\myvec{1}$).
It is known that $\lambda_2 \geq 4 / (n \cdot \diam{G})$~\cite{mckay1981practical}.
Hence, if we set $\eta := \frac{\kappa \epsilon}{3 \cdot \sqrt{mn \log n} \diam{G}}$,
we get for small-world graphs:
\begin{align*}
  \Vert \tilde{\myvec{x}} - \myvec{x} \Vert_\infty
  & \leq \frac{1}{\sqrt{\lambda_2}} \cdot \Vert \tilde{\myvec{x}} - \myvec{x} \Vert_\mat{L}
  \leq \frac{\eta}{\sqrt{\lambda_2}} \cdot \Vert \myvec{x} \Vert_\mat{L} \\
  & \leq \frac{\eta}{\sqrt{4/(n \cdot \diam{G}})} \cdot \Vert \myvec{x} \Vert_\mat{L}
  \leq \frac{\eta \sqrt{n \log n}}{2} \cdot 2\sqrt{m}\cdot \Vert \myvec{x} \Vert_\infty \\
  & = \frac{\kappa \epsilon}{3 \sqrt{mn \log n} \diam{G}} \cdot \sqrt{mn \log n} \cdot \Vert \myvec{x} \Vert_\infty \\
  & \leq   \frac{\kappa \epsilon}{3 \diam{G}} \Vert \myvec{x} \Vert_\infty
  \leq \frac{\kappa \epsilon}{3 \diam{G}} \diam{G}
  \leq \frac{\kappa}{3} \epsilon.
\end{align*}
The second last inequality follows from the fact that $\myvec{x}$ expresses potentials scaled by $1/n$,
arising from $n-1$ (scaled) effective resistance problems fused
together. The maximum norm of $\myvec{x}$ can thus be bounded
by $(n-1) \frac{1}{n} \effres{u}{v} \leq \diam{G}$,
because $\effres{\cdot}{\cdot}$ is bounded by the graph distance.

Taking Eq.~(\ref{eq:diag-comp}) into account,
this means that the maximum error of a diagonal value in $\widetilde{\Lpinv}$ as a consequence from the linear system can be bounded
by $\kappa \epsilon$.
The resulting running time for the solver is then $\tilde{\Oh}(m \log^{1/2} n
\log (\frac{3 \sqrt{mn \log n}\diam{G}}{\kappa \epsilon})) = \tilde{\Oh}(m \log^{1/2} n \log(n/\epsilon))$.

The main bottleneck is the loop that samples $\tau$ USTs and aggregates their contribution in each iteration.
According to Lemma~\ref{lem:ust-time}, sampling takes $\Oh(m \log n)$ time per tree
in small-world graphs. Aggregating a tree's contribution is less expensive
(Lemma~\ref{lem:aggregation-time}).

Let us determine next the sample size $\tau$ that allows the desired guarantee.
To this end, let $\epsilon' := (1 - \kappa)\epsilon$
denote the possible absolute error for the effective resistance estimates.
Plugging $\tau := \ecc(u)^2 \cdot \lceil \log(2m/\delta) / (2(\epsilon')^2) \rceil$
into Hoeff\-ding's inequality (Lemma~\ref{lem:hoeffding}), yields for each single edge $e \in E$
and its estimated electrical flow $\myvec{\tilde{f}}(e)$: $Pr[\myvec{\tilde{f}}(e) = \myvec{f}(e) \pm \epsilon' / \ecc(u)] \geq 1 - \delta / m$.
Using the union bound, we get that $Pr[\myvec{\tilde{f}}(e) = \myvec{f}(e) \pm \epsilon' / \ecc(u)]$
for all $e \in E$ at the same time holds with probability $\geq 1 - \delta$. Since for all $v$ the path length
$|P(v)|$ is bounded by $\ecc(u)$, another application of the union bound yields that
$Pr[R[v] = \effres{u}{v} \pm \epsilon'] \geq 1 - \delta$.
\end{proof}

%%%%%%%%%%%%%%%%%%%%%%%%%%%%%%%%%%%%%%%%%%%%%%%%%%%%%%%%%%%%%%%%%%%%%%%%%%%%

\subsection{Proof of Lemma~\ref{lem:nrwb-derivation}}
\label{app:nrwb}
\begin{proof}
Recall that the normalized random-walk betweenness is expressed as follows (Eq.~(\ref{normalizerandomwalk})):
\begin{align*}
  c_b(v) & = \frac{1}{n}+\frac{1}{n-1} \sum_{t \neq v} \frac{\ment{M^{-1}}{t}{t} - \ment{M^{-1}}{t}{v}}{\ment{M^{-1}}{t}{t} + \ment{M^{-1}}{v}{v} -2\ment{M^{-1}}{t}{v}}
\end{align*}
where $\mat{M} := \mat{L} + \mat{P}$ with $\mat{L}$ the Laplacian matrix
and $\mat{P}$ the projection
operator onto the zero eigenvector of the Laplacian such that $\ment{P}{i}{j} = 1/n$.
We also have $\Lpinv := (\mat{L} + \mat{P})^{-1} - \mat{P}$ and thus we can
replace $\mat{M}^{-1}$ with $\Lpinv + \mat{P}$.
Then, for the numerator of  Eq.~(\ref{normalizerandomwalk}) we have:
\begin{align}
  \label{numerator}
  & \sum_{t \neq v} (\ment{M^{-1}}{t}{t} - \ment{M^{-1}}{t}{v}) = \nonumber \\
  & \sum_{t \neq v} (\ment{\Lpinv}{t}{t} - \ment{P}{t}{t} -\ment{\Lpinv}{t}{v} + \ment{P}{t}{v}) = \nonumber \\
  & \sum_{t \neq v} (\ment{\Lpinv}{t}{t} -\ment{\Lpinv}{t}{v}) = \trace{\Lpinv} - \ment{\Lpinv}{v}{v} - \sum_{t \neq v} \ment{\Lpinv}{t}{v} = \nonumber\\
  & \trace{\Lpinv}.
\end{align}
The second equality holds because $ \sum\limits_{l \neq v} (\ment{P}{t}{v} - \ment{P}{t}{t}) = 0$
for all $t,v \in V$.
The final equality holds since $\sum\limits_{t \neq v}^n\ment{\Lpinv}{t}{v} = -\ment{\Lpinv}{v}{v}$
for all $v \in V$.

Then, for the denominator we have:
  \begin{align}
    \label{denom}
    & (n-1) \sum\limits_{t \neq v} (\ment{M^{-1}}{t}{t} + \ment{M^{-1}}{v}{v} -2\ment{M^{-1}}{t}{v}) = \nonumber \\
    & (n-1) \sum\limits_{t \neq v} (\ment{\Lpinv}{t}{t} + \ment{P}{t}{t} + \ment{\Lpinv}{v}{v}
    +  \ment{P}{v}{v} -2\ment{\Lpinv}{t}{v} - 2 \ment{P}{t}{v})  = \nonumber \\
    & (n-1) \sum\limits_{t \neq v} (\ment{\Lpinv}{t}{t} + \ment{\Lpinv}{v}{v} -2\ment{\Lpinv}{t}{v}) = \nonumber \\
    & (n-1)\farnel{G}{v}.
  \end{align}
  The second equality holds since $\sum\limits_{t \neq v}\ment{P}{t}{t} +\ment{P}{v}{v} -2 \ment{P}{t}{v} = 0$ for all $t, v \in V$ and the last equality due to Eq.~(\ref{eq:eff-res}) and the definition
  of electrical farness.
  The claim follows from combining Eqs.~(\ref{numerator}) and~(\ref{denom}).
\end{proof}

%%%%%%%%%%%%%%%
\section{Kirchhoff Index and Related Centralities}
\label{app:Kirchhoff}
\subsection{Description}
\label{sub:app:Kirchhoff-descr}
The sum of the effective resistance distances over all pairs of vertices is an
important measure for network robustness known as the Kirchhoff index $\mathcal{K}(G)$
or (effective) graph resistance~\cite{Klein93,Ellens2011}.
The Kirchhoff index is often computed via the closed-form expression $\mathcal{K}(G) = n \trace{\Lpinv}$~\cite{Klein93},
where the trace is the sum of the diagonal elements.
Li and Zhang~\cite{li2018kirchhoff} recently
adapted the Kirchhoff index to obtain two edge centrality measures for $e \in E$:
(i)
$\mathcal{C}_{\theta}(e) := n\trace{\Lpinv\setminus_{\theta}e}$, where
$\mat{L}\setminus_{\theta}e$ corresponds to a graph in which edge $e$ has been down-weighted
according to a parameter $\theta$ and (ii) $\mathcal{C}^{\Delta}_{\theta}(e) := \mathcal{C}_{\theta}(e) - \mathcal{K}(G)$,
which quantifies the difference of the Kirchhoff indices
between the new and the original graph.

%%%%%%%%%%%%%%%
\subsection{Related Work}
\label{sub:app:Kirchhoff-rel-work}
To calculate the Kirchhoff edge centralities, Ref.~\cite{li2018kirchhoff}
uses techniques such as partial Cholesky factorization~\cite{kyng16},
fast Laplacian solvers and the Hutchinson estimator.
For $\mathcal{C}^{\Delta}_{\theta}(e)$, which is the more interesting measure in our context,
they propose an $\epsilon$-approximation
algorithm that approximates $\mathcal{C}^{\Delta}_{\theta}(e)$ for all edges in
$\Oh(m\epsilon^{-2}\theta^{-2}\log^{2.5}{n}\log(1/\epsilon))$ time (up to polylogarithmic factors).
The algorithm uses the Sherman-Morrison formula~\cite{sherman1950}, which
gives a fractional expression of $(\Lpinv\setminus_{\theta}e - \Lpinv)$.
The numerator is approximated by the Johnson-Lindenstrauss lemma,
and the denominator by effective resistance estimates for all edges.

%%%%%%%%%%%%%%%
\subsection{Kirchhoff Index and Edge Centralities}
\label{sub:app:Kirchhoff-results}
It is easy to see that Algorithm~\ref{alg:approx-diag-lpinv} can approximate Kirchhoff Index,
exploiting the expression $\mathcal{K}(G) = n \trace{\Lpinv}$~\cite{Klein93}.
As a direct consequence, we have:

\begin{proposition}
  \label{KirchhoffIndex}
  Let $G$ be a small-world graph as in Theorem~\ref{thm:time-complexity}.
  Then, Algorithm~\ref{alg:approx-diag-lpinv} approximates with high probability
  $\mathcal{K}(G)$ with absolute error $\pm \epsilon$ in $\Oh(m \log^4 n \cdot \epsilon^{-2})$ time.
\end{proposition}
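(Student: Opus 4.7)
The plan is to use the identity $\mathcal{K}(G) = n \operatorname{tr}(\Lpinv) = n \sum_{v \in V} \ment{\Lpinv}{v}{v}$~\citep{Klein93} and form the estimator $\widetilde{\mathcal{K}(G)} := n \sum_v \ment{\widetilde{\Lpinv}}{v}{v}$ from the output of Algorithm~\ref{alg:approx-diag-lpinv}. The running-time bound would then follow from Theorem~\ref{thm:time-complexity}; what needs proving is the error guarantee.

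To match the claimed running time of $\Oh(m \log^4 n \cdot \epsilon^{-2})$, a naive per-entry union bound is too loose: a per-entry $\pm\epsilon'$ guarantee from Theorem~\ref{thm:time-complexity} translates into a $\pm n^2 \epsilon'$ guarantee on $\mathcal{K}(G)$, which would force $\epsilon' = \Theta(\epsilon/n^2)$ and blow the cost up by a factor of $n^4$. I would therefore analyse the trace estimate directly. Combining Eq.~(\ref{eq:diag-comp}) with $\onesvec^T \Lpinv = \mathbf{0}$ yields $\operatorname{tr}(\Lpinv) = \farnel{G}{u} - n\, \ment{\Lpinv}{u}{u}$ and, for the algorithm's estimate, $\widetilde{\operatorname{tr}(\Lpinv)} = \widetilde{\farnel{G}{u}} - n\, \vent{x}{u}$ with $\widetilde{\farnel{G}{u}} := \sum_{v \neq u} R[v]/\tau$. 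Hence $|\widetilde{\mathcal{K}(G)} - \mathcal{K}(G)| \le n \cdot |\widetilde{\farnel{G}{u}} - \farnel{G}{u}| + n^2 \cdot |\vent{x}{u} - \ment{\Lpinv}{u}{u}|$, and it suffices to bound each term by $\epsilon/2$ with probability $1 - 1/\poly(n)$.

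The second (solver) term is absorbed by the scaling of $\eta$ already used in the proof of Theorem~\ref{thm:time-complexity}: since $\eta$ is chosen polynomially small in $n$, that choice carries over without affecting the stated running time. The hard part is bounding the first (combinatorial) term. Writing $\widetilde{\farnel{G}{u}} - \farnel{G}{u} = \frac{1}{\tau}\sum_{i=1}^{\tau} Z_i$ with i.i.d.\ zero-mean variables $Z_i := \sum_{v \neq u}(R[v]^{(i)} - \effres{u}{v})$, a plain Hoeffding bound using only the worst-case $|Z_i| \le \farnc{G}{u} = \Oh(n \log n)$ is too weak and would force $\tau = \tilde{\Oh}(n^4/\epsilon^2)$. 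The main technical obstacle will be a sharper (variance-type) bound on $Z_i$ exploiting the sign cancellation among the per-vertex contributions of a single UST along edges of $B_u$, which should drive $\tau$ back down to $\Oh(\log^3 n / \epsilon^2)$.

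Given such a concentration bound, multiplying $\tau$ by the $\Oh(m \log n)$ per-iteration cost of Wilson's algorithm (Lemma~\ref{lem:ust-time}) and the $\Oh(n \log n)$ aggregation cost (Lemma~\ref{lem:aggregation-time}) on small-world graphs, and absorbing the lower-order Laplacian-solver cost, yields the claimed $\Oh(m \log^4 n \cdot \epsilon^{-2})$ bound.
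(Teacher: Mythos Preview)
The paper gives no proof for this proposition beyond the single sentence preceding it: ``It is easy to see that Algorithm~\ref{alg:approx-diag-lpinv} can approximate Kirchhoff Index, exploiting the expression $\mathcal{K}(G) = n\,\trace{\Lpinv}$.'' In other words, the paper treats the result as an immediate corollary of Theorem~\ref{thm:time-complexity}. You are right that, taken literally, this does not yield a $\pm\epsilon$ \emph{absolute} error on $\mathcal{K}(G)$ within the stated running time: a per-entry $\pm\epsilon'$ guarantee propagates to $\pm n^2\epsilon'$ on $\mathcal{K}(G)$, forcing $\epsilon' = \Theta(\epsilon/n^2)$ and inflating the $\epsilon^{-2}$ factor by $n^4$. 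So your analysis already goes further than what the paper provides, and your decomposition of the error into a solver term and a combinatorial term via $\trace{\Lpinv} = \farnel{G}{u} - n\,\ment{\Lpinv}{u}{u}$ is sensible.

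That said, your proposal has a genuine gap of its own at precisely the decisive step. You correctly observe that the crude range bound $|Z_i| = \Oh(n\log n)$ is too weak for Hoeffding, and you then \emph{postulate} a ``sharper (variance-type) bound \dots exploiting the sign cancellation among the per-vertex contributions of a single UST'' that ``should drive $\tau$ back down to $\Oh(\log^3 n/\epsilon^2)$.'' But you neither state this bound precisely nor give any argument for it. This is not a routine detail: the summands $R[v]^{(i)}$ for different $v$ are all derived from the \emph{same} random spanning tree $T_i$ and are therefore strongly positively correlated along branches of $B_u$; there is no evident mechanism forcing $\operatorname{Var}(Z_i)$ down to polylogarithmic size, and the informal appeal to ``sign cancellation'' is a hope rather than an argument. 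Until that concentration inequality is actually established, your proof delivers no more than the paper's one-liner does.
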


We also observe that we can use a component of Algorithm~\ref{alg:approx-diag-lpinv} to approximate
$\mathcal{C}^{\Delta}_{\theta}(e)$.
Recall that $\mathcal{C}^{\Delta}_{\theta}(e) = \mathcal{C}_{\theta}(e) - \mathcal{K}(G) = n(\trace{\Lpinv\setminus_{\theta}e} - \trace{\Lpinv})$.
Using the Sherman-Morrison formula, as done in Ref.~\cite{li2018kirchhoff}, we have:
  \begin{align}
    \label{eq:ck}
    \mathcal{C}^{\Delta}_{\theta}(e)= n(1-\theta)\frac{\myvec{w}(e)\trace{\Lpinv \myvec{b}_e\myvec{b}^\top_e\Lpinv}}{1-(1-\theta)\myvec{w}(e)\myvec{b}^\top_e\Lpinv \myvec{b}_e},
  \end{align}
where $\myvec{b}_e$ for $e = (u,v)$ is the vector $\uvec{u} - \uvec{v}$.

Ref.~\cite{li2018kirchhoff} approximates $\mathcal{C}^{\Delta}_{\theta}(e)$ with an
algorithm that runs in $\Oh(m\theta^{-2}\log^{2.5}{n}\log(1/\epsilon)\poly{\log\log n} \cdot \epsilon^{-2})$ time. The algorithm is dominated by
the denominator of Eq.~(\ref{eq:ck}), which runs in
$\Oh(m \theta^{-2}\log^{2.5}{n} \poly{\log\log n} \cdot \epsilon^{-2})$.
For the numerator of Eq.~(\ref{eq:ck}), they use the following Lemma:
\begin{lemma}
\label{KC-numerator}
(paraphrasing from Ref.~\cite{li2018kirchhoff})
Let $\mat{L}$ be a Laplacian matrix and $\epsilon$ a scalar such that $0 < \epsilon \leq 1/2 $.
There is an algorithm that achieves an $\epsilon$-approximation of the numerator of Eq.~(\ref{eq:ck})
with high probability in $\Oh(m \log^{1.5}{n} \log(1/\epsilon)  \cdot \epsilon^{-2})$ time.
\end{lemma}

The algorithm in Lemma~\ref{KC-numerator} uses the Monte-Carlo estimator
with $\Oh(\epsilon^{-2} \log{n})$ random vectors $\myvec{z}_i$ to calculate the trace
of the implicit matrix $\myvec{y}^{\top}_i \myvec{b}_e \myvec{b}^{\top}_e\myvec{y}_i$,
where $\myvec{y}_i$ is the approximate solution of $\myvec{y}_i := \Lpinv\myvec{z}_i$ -- derived from solving the corresponding linear system involving $\mat{L}$.
For each system, the Laplacian solver runs in $\Oh(m \log^{1/2}{n} \log(1/\epsilon))$ time.

We notice that a UST-based sampling approach works again for the denominator:
The denominator is just $1-(1-\theta)\myvec{w}(e)\myvec{r}(e)$, where $e \in E$
($\myvec{r}(e) = \myvec{b}^\top_e\Lpinv \myvec{b}_e$).
Approximating $\myvec{r}(e)$ for every $e \in E$ then requires sampling
USTs and counting for each edge $e$ the number of USTs it appears in.
Moreover, we only need to sample  $q = \lceil 2\epsilon^{-2}\log(2m/\delta) \rceil$ to
get an $\epsilon$-approximation
of the effective resistances for all edges
(using Theorem~8 in Ref.~\cite{Hayashi2016EfficientAF}).
Since $\myvec{r}(e)$ are approximate, we need to bound their approximation
when subtracted from $1$. Following Ref.~\cite{li2018kirchhoff},
we use the fact that $0 < \theta <1$ and that for each edge
$\myvec{w}(e)\myvec{r}(e)$ is between $0$ and $1$, bounding the denominator.
The above algorithm can be used to approximate the denominator
of Eq.~(\ref{eq:ck}) with absolute error $\pm \epsilon$
in $\Oh(m \log^2 n \cdot \epsilon^{-2})$ time. Combining the above algorithm
and Lemma~\ref{KC-numerator}, it holds that:

\begin{proposition}
  \label{KirchhoffCentral}
Let $G = (V, E)$ be a small-world graph as in Theorem~\ref{thm:time-complexity}.
Then, there is an algorithm (using Lemma~\ref{KC-numerator} and our Wilson-based sampling algorithm)
that approximates with high probability $\mathcal{C}^{\Delta}_{\theta}(e)$
for all $e \in E$ with absolute error $\pm \epsilon$
in $\Oh(m \log^2 n \log(1/\epsilon) \cdot \epsilon^{-2})$ time.
\end{proposition}

%%%%%%%%%%%%%%%%%%%%%%%%%%%%%%%%%%%%%%%%%%%%
\section{Detailed Engineering Aspects}
\subsection{UST Generation, Pivot Selection, and the Linear System}
\label{sub:general-engineering}
Wilson's algorithm~\cite{Wilson:1996:GRS:237814.237880} using loop-erased
random walks is the best choice in practice for UST generation and also
the fastest asymptotically for unweighted small-world graphs.
A fast random number generator is required for this algorithm;
our code uses PCG32~\cite{pcg2014} for this purpose.
For our implementation we use a variant of Wilson's algorithm
to sample each tree, proposed by Hayashi \etal~\cite{Hayashi2016EfficientAF}:
first, one computes the biconnected components of $G$, then applies
Wilson to each biconnected component, and finally combines the component trees to a UST of $G$.
In each component, we use a vertex with maximal degree
	as root for Wilson's algorithm.
Using this approach, Hayashi \etal~\cite{Hayashi2016EfficientAF} experienced an average
performance improvement of around 40\% on sparse graphs compared
to running Wilson directly.

As a consequence of Theorem~\ref{thm:time-complexity}, the pivot vertex $u$ should be chosen
to have low eccentricity. As finding the
vertex with lowest eccentricity with a naive APSP approach would be too
expensive, we compute a lower bound on the eccentricity for all vertices of
the graph and choose $u$ as the vertex with the lowest bound. These bounds are
computed using a strategy analogous to the double sweep lower bound
by Magnien \etal~\cite{magnien2009fast}: we run a BFS from a random vertex $v$,
then another BFS from the farthest vertex from $v$, and so on.
At each BFS we update the bounds of all the visited vertices; an empirical
evaluation has shown that 10 iterations yield a reasonably accurate approximation of
the vertex with lowest eccentricity.

As a result from preliminary experiments, we use a general-purpose Conjugate Gradient (CG)
solver for the single (sparse) Laplacian linear systems, together with a diagonal
preconditioner. We choose the implementation of the C++ library Eigen~\cite{eigenweb}
for this purpose and found that the accuracy parameter
$\kappa = 0.3$ yields a good trade-off
between the CG and UST sampling steps.

\subsection{Parallel Implementation}
\label{sec:app:parallelism}

%--------------------------------------

\subparagraph{Shared memory}

Our implementation uses OpenMP for shared-memory parallelism.
We aggregate $R[\cdot]$ in thread-local vectors
and perform a final parallel reduction
over all $R[\cdot]$. We found that on the graphs that we can
handle in shared memory, no sophisticated load balancing strategies
are required to achieve reasonable scalability.

%--------------------------------------

\subparagraph{Distributed memory}

We provide an implementation of our algorithm
for replicated graphs in distributed memory
that exploits hybrid parallelism based on MPI + OpenMP.
On each compute node, we take samples and aggregate
$R[\cdot]$ as in shared memory.
Compared to the shared-memory implementation, however,
our distributed-memory implementation exhibits two main peculiarities:
(i) we still solve the initial Laplacian system on a single compute node
only; we interleave, however, this step with UST sampling on other compute nodes,
and (ii) we employ explicit load balancing.
The choice to solve the initial system on a single compute node only
is done to avoid additional communication among nodes.
In fact, we only expect distributed CG solvers to outperform
this strategy for inputs that are considerably larger than
the largest graphs that we consider.
Furthermore, since we interleave this step of the algorithm with UST sampling
on other compute nodes, our strategy only results in a bottleneck
on input graphs where solving a single Laplacian system is slower
than taking \emph{all} UST samples -- but these inputs are
already \enquote{easy}.

For load balancing, the naive approach would consist of
statically taking $\ceil{\tau/p}$ UST samples on each of the $p$ compute nodes.
However, in contrast to the shared-memory case, this does not yield
satisfactory scalability.
In particular, for large graphs, the running time of the UST
sampling step has a high variance.
To alleviate this issue, we use a simple \emph{dynamic} load balancing
strategy:
periodically, we perform an asynchronous reduction
(\texttt{MPI\_Iallreduce}) to calculate the total number of
UST samples taken so far (over all compute nodes).
Afterwards, each compute node calculates the number of samples
that it takes before the next asynchronous reduction
(unless more than $\tau$ samples
were taken already, in which case the algorithm stops).
We compute this number as $\lceil \tau / (b \cdot p^\xi) \rceil$
for fixed constants $b$ and $\xi$.
We also overlap the asynchronous reduction with
additional sampling to avoid idle times.
Finally, we perform a synchronous reduction (\texttt{MPI\_Reduce}) to aggregate
$R[\cdot]$ on a single compute node before outputting the resulting diagonal values.
By parameter tuning~\cite{angriman2019guidelines},
we found that choosing $b = 25$ and $\xi = 0.75$
yields the best parallel scalability.

\newpage

%%%%%%%%%%%%%%%%%%%%%%%%%%%%%%%%%%%
\ifthenelse{\boolean{confversion}}{}{
\section{Experimental Settings}

\subsection{Excluded Competitors}
\label{sub:app:excluded}
PSelInv~\cite{JACQUELIN201884} is a distributed-memory tool
for computing selected elements of $\mat{A}^{-1}$ --
exactly those that correspond to the non-zero entries of the original matrix $\mat{A}$.
However, when a smaller set of elements is required
(such as $\diag{\Lpinv}$), PSelInv is not competitive
on our input graphs: preliminary experiments of ours
have shown that even on $4\times 24$ cores PSelInv is
one order of magnitude slower than a sequential run of our algorithm.

Another conceivable way to compute $\diag{\Lpinv}$ is to extract
the diagonal from a low-rank approximation of $\Lpinv$~\cite{bozzo2012approximations}
using a few eigenpairs. However, our experiments have shown that
this method is not competitive -- neither in terms of quality nor in running time.

Hence, we do not include Refs.~\cite{JACQUELIN201884,bozzo2012approximations} in the presentation of our experiments.

%%%%%%%%%%%%%%%%%%%%%%%%%%%%%%%%%%%%%%%%%%%%%%%%%%%%%%%%%%%%%%%%%%
\subsection{Instance Statistics}
\label{sub:app:instances}
Tables~\ref{tab:networks-medium-gt}, \ref{tab:networks-medium},
\ref{tab:networks-large}, and \ref{tab:networks-cluster}
depict detailed statistics about the real-world instances
used in our experiments.

\begin{table}[h]
\setlength{\tabcolsep}{\instTabColSep}
\small
\centering
\caption{Medium-size instances with ground truth}
\label{tab:networks-medium-gt}
\begin{tabular}{lrrrrrr}
Network & Type & ID & $|V|$ & $|E|$ & diam & ecc($u$)\\
\midrule
slashdot-zoo & social & \texttt{sz} & \numprint{79116} & \numprint{467731} & \numprint{12} & \numprint{6}\\
petster-cat-household & social & \texttt{pc} & \numprint{68315} & \numprint{494562} & \numprint{10} & \numprint{6}\\
wikipedia\_link\_ckb & web & \texttt{wc} & \numprint{60257} & \numprint{801794} & \numprint{13} & \numprint{7}\\
wikipedia\_link\_fy & web & \texttt{wf} & \numprint{65512} & \numprint{921533} & \numprint{10} & \numprint{5}\\
loc-gowalla\_edges & social & \texttt{lg} & \numprint{196591} & \numprint{950327} & \numprint{16} & \numprint{8}\\
petster-dog-household & social & \texttt{pd} & \numprint{255968} & \numprint{2148090} & \numprint{11} & \numprint{6}\\
livemocha & social & \texttt{lm} & \numprint{104103} & \numprint{2193083} & \numprint{6} & \numprint{4}\\
petster-catdog-household & social & \texttt{pa} & \numprint{324249} & \numprint{2642635} & \numprint{12} & \numprint{7}\\
\midrule
\end{tabular}
% Generated on: 20/04/2020 22:31:22

\end{table}

\begin{table}[h]
\setlength{\tabcolsep}{\instTabColSep}
\small
\centering
\caption{Medium-sized instances without ground truth}
\label{tab:networks-medium}
\begin{tabular}{lrrrrrr}
Network & Type & ID & $|V|$ & $|E|$ & diam & ecc($u$)\\
\midrule
eat & words & \texttt{ea} & \numprint{23132} & \numprint{297094} & \numprint{6} & \numprint{4}\\
web-NotreDame & web & \texttt{wn} & \numprint{325729} & \numprint{1090108} & \numprint{46} & \numprint{23}\\
citeseer & citation & \texttt{cs} & \numprint{365154} & \numprint{1721981} & \numprint{34} & \numprint{18}\\
wikipedia\_link\_ml & web & \texttt{wm} & \numprint{131288} & \numprint{1743937} & \numprint{12} & \numprint{7}\\
wikipedia\_link\_bn & web & \texttt{wb} & \numprint{225970} & \numprint{2183246} & \numprint{11} & \numprint{6}\\
flickrEdges & images & \texttt{fe} & \numprint{105722} & \numprint{2316668} & \numprint{9} & \numprint{6}\\
petster-dog-friend & social & \texttt{pr} & \numprint{426485} & \numprint{8543321} & \numprint{11} & \numprint{7}\\
\midrule
\end{tabular}
% Generated on: 20/04/2020 22:31:22

\end{table}

\begin{table}[h]
\setlength{\tabcolsep}{\instTabColSep}
\small
\centering
\caption{Large instances}
\label{tab:networks-large}
\begin{tabular}{lrrrrr}
Network & Type & $|V|$ & $|E|$ & diam & ecc($u$)\\
\midrule
hyves & social & \numprint{1402673} & \numprint{2777419} & \numprint{10} & \numprint{7}\\
com-youtube & social & \numprint{1134890} & \numprint{2987624} & \numprint{24} & \numprint{12}\\
flixster & social & \numprint{2523386} & \numprint{7918801} & \numprint{8} & \numprint{4}\\
petster-catdog-friend & social & \numprint{575277} & \numprint{13990793} & \numprint{13} & \numprint{7}\\
flickr-links & social & \numprint{1624991} & \numprint{15473043} & \numprint{24} & \numprint{12}\\
\midrule
\end{tabular}
% Generated on: 20/04/2020 22:31:22

\end{table}

\begin{table}[h]
\setlength{\tabcolsep}{\instTabColSep}
\small
\centering
\caption{Large instances used only on $16\times 24$ cores.}
\label{tab:networks-cluster}
\begin{tabular}{lrrrrr}
Network & Type & $|V|$ & $|E|$ & diam & ecc($u$)\\
\midrule
petster-carnivore & social & \numprint{601213} & \numprint{15661775} & \numprint{15} & \numprint{8}\\
soc-pokec-relationships & social & \numprint{1632803} & \numprint{22301964} & \numprint{14} & \numprint{8}\\
soc-LiveJournal1 & social & \numprint{4843953} & \numprint{42845684} & \numprint{20} & \numprint{10}\\
livejournal-links & social & \numprint{5189808} & \numprint{48687945} & \numprint{23} & \numprint{12}\\
orkut-links & social & \numprint{3072441} & \numprint{117184899} & \numprint{10} & \numprint{6}\\
wikipedia\_link\_en & web & \numprint{13591759} & \numprint{334590793} & \numprint{12} & \numprint{7}\\
\midrule
\end{tabular}
% Generated on: 20/04/2020 22:31:22

\end{table}

\newpage

\subsection{Relative Error Quality Measures}
\label{sub:app:rel-err-qual}

Because our algorithm computes an absolute $\pm \epsilon$-approximation of $\diag{\Lpinv}$
with high probability, it is expected to yield better results in terms of
maximum absolute error and ranking than numerical approaches with a relative
error guarantee. Indeed, as we show in Appendix~\ref{app:additional_experiments},
the quality assessment changes if we consider
quality measures based on a relative error such as:
\[
    \lonerel := \frac{\onen{\diag{\Lpinv} - \diag{\widetilde{\Lpinv}}}}{\onen{\diag{\Lpinv}}},
\]
\[
    \ltworel := \frac{\twon{\diag{\Lpinv} - \diag{\widetilde{\Lpinv}}}}{\twon{\diag{\Lpinv}}},
\]
and
\[
    \erel := \mathrm{gmean}_i \frac{|\Lpinv_{ii} - \widetilde{\Lpinv_{ii}}|}{\Lpinv_{ii}}.
\]

%%%%%%%%%%%%%%%%%%%%%%%%%%%%%%%%%%%%%%%%%%%
\section{Additional Experimental Results}
\label{app:additional_experiments}

\begin{figure}[htp]
\centering
\begin{subfigure}[t]{\textwidth}
\centering
\includegraphics{plots/legend_quality}
\end{subfigure}
\begin{subfigure}[t]{.33\textwidth}
\centering
\includegraphics{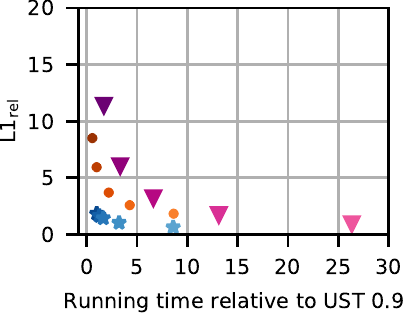}
\end{subfigure}\hfill
\begin{subfigure}[t]{.33\textwidth}
\centering
\includegraphics{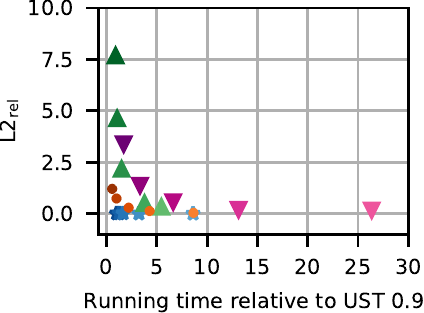}
\end{subfigure}\hfill
\begin{subfigure}[t]{.33\textwidth}
\centering
\includegraphics{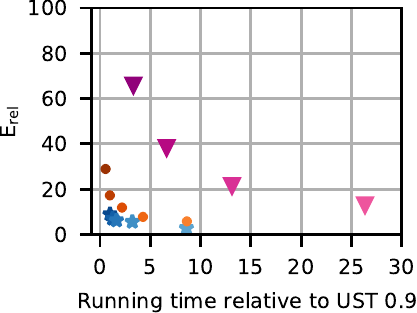}
\end{subfigure}
\caption{$\lonerel$, $\ltworel$ and $\erel$ \wrt the running time of our algorithm with
$\epsilon = 0.9$. All data points are aggregated using the geometric mean over
the instances of Table~\ref{tab:networks-medium-gt}.}
\label{fig:rel-errs}
\end{figure}

Figure~\ref{fig:rel-errs} shows that, when assessing the error in terms of $\lonerel$,
$\ltworel$, or $\erel$, for the same running time
\ust yields results that are still better in terms of quality than the
competitors', but not by such a wide margin.
This can be explained by the fact that
the numerical solvers used by our competitors often employ measures analogous to $\lonerel$ and
$\ltworel$ in their stopping conditions.

\newpage

%%%%%%%%%%%%%%%%%%%%%%%%%%%%%%%
\subsection{Parallel Scalability}
\label{sub:app:par-scal}

\parallelScalability

\timeBreakdown
Figures~\ref{fig:par-scal} and~\ref{fig:breakdown} report additional results
regarding parallel scalability of \ust.

%%%%%%%%%%%%
\subsection{Scalability on R-MAT Graphs}
\label{sub:rmat}

In Figure~\ref{fig:rmat} we report additional results about the scalability of \ust \wrt the
graph size using the R-MAT~\cite{chakrabarti2004r} model. For this
experiment we use the Graph500 parameter setting (\ie edge factor 16, $a =
0.57$, $b = 0.19$, $c = 0.19$, and $d = 0.05$).
The algorithm
requires only \maxTimeRmat minutes on inputs with up to \maxEdgesRmat million edges.
In particular, since these graphs have a nearly-constant diameter,
our algorithm is faster than on random hyperbolic graphs.
Qualitatively, it exhibits a similar scalability.

\syntheticInstances

\pagebreak

%%%%%%%%%%%%
\subsection{Memory Consumption}
\label{sec:memory}

Finally, we measure the peak memory consumption of all the algorithms while
running sequentially on the instances of Tables~\ref{tab:networks-medium-gt}
and~\ref{tab:networks-medium}.
More precisely, we subtract the peak resident set size before launching the
algorithm from the peak resident set size after the algorithm finished.
Figure~\ref{fig:memory} shows that \ust requires
less memory than the competitors on all the considered instances.
This can be explained by the fact that, unlike its competitors, our algorithm
does not rely on Laplacian solvers with considerable
memory overhead.
For the
largest network in particular, the peak memory is \ustMemPR MB for \ust, and at
least \lamgMemPR GB for the competitors.

\begin{figure}[tb]
\centering
\includegraphics{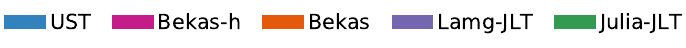}
\includegraphics{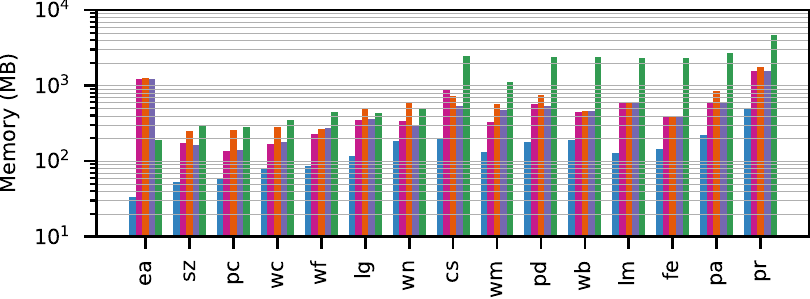}
\caption{Difference between the peak resident set size before and after a sequential
run of each algorithm on the instances of Tables~\ref{tab:networks-medium-gt}
and~\ref{tab:networks-medium}.}
\label{fig:memory}
\end{figure}

\newpage

\subsection{Baseline}
\label{app:baseline}

\begin{table}[h]
    \centering
    \caption{
    Precision of the diagonal entries computed by the LAMG solver (tolerance: $\lamgTol$) compared with
    the ones computed by the Matlab \texttt{pinv} function.}
    \label{tab:lamg_pinv_table}

  \begin{adjustbox}{width=\textwidth}
    \begin{tabular}{lrrrrrrrrr}
Network & Type & $|V|$ & |E| & diam. & $\max_i \Lpinv_{ii} - \widetilde{\Lpinv_{ii}}$ & $\erel$ & $\lonerel$ & $\ltworel$ & Ranking\\
\midrule
moreno-lesmis & characters & \numprint{77} & \numprint{254} & \numprint{5} & \numprint{0.0000} & \numprint{0.00}\% &\numprint{0.00}\% & \numprint{0.00}\% & \numprint{0.48}\%\\
petster-hamster-household & social & \numprint{874} & \numprint{4003} & \numprint{8} & \numprint{0.0006} & \numprint{0.23}\% &\numprint{0.13}\% & \numprint{0.07}\% & \numprint{0.02}\%\\
subelj-euroroad & infrastructure & \numprint{1039} & \numprint{1305} & \numprint{62} & \numprint{0.0031} & \numprint{0.12}\% &\numprint{0.09}\% & \numprint{0.05}\% & \numprint{0.00}\%\\
arenas-email & communication & \numprint{1133} & \numprint{5451} & \numprint{8} & \numprint{0.0002} & \numprint{0.13}\% &\numprint{0.07}\% & \numprint{0.03}\% & \numprint{0.00}\%\\
dimacs10-polblogs & web & \numprint{1222} & \numprint{16714} & \numprint{8} & \numprint{0.0002} & \numprint{0.18}\% &\numprint{0.07}\% & \numprint{0.02}\% & \numprint{0.01}\%\\
maayan-faa & infrastructure & \numprint{1226} & \numprint{2408} & \numprint{17} & \numprint{0.0005} & \numprint{0.08}\% &\numprint{0.06}\% & \numprint{0.03}\% & \numprint{0.00}\%\\
petster-hamster-friend & social & \numprint{1788} & \numprint{12476} & \numprint{14} & \numprint{0.0003} & \numprint{0.15}\% &\numprint{0.07}\% & \numprint{0.02}\% & \numprint{0.01}\%\\
petster-hamster & social & \numprint{2000} & \numprint{16098} & \numprint{10} & \numprint{0.0001} & \numprint{0.09}\% &\numprint{0.04}\% & \numprint{0.02}\% & \numprint{0.01}\%\\
wikipedia-link-lo & web & \numprint{3733} & \numprint{82977} & \numprint{9} & \numprint{0.0001} & \numprint{0.05}\% &\numprint{0.02}\% & \numprint{0.01}\% & \numprint{0.03}\%\\
advogato & social & \numprint{5042} & \numprint{39227} & \numprint{9} & \numprint{0.0001} & \numprint{0.03}\% &\numprint{0.02}\% & \numprint{0.01}\% & \numprint{0.01}\%\\
p2p-Gnutella06 & computer & \numprint{8717} & \numprint{31525} & \numprint{10} & \numprint{0.0000} & \numprint{0.01}\% &\numprint{0.01}\% & \numprint{0.00}\% & \numprint{0.00}\%\\
p2p-Gnutella05 & computer & \numprint{8842} & \numprint{31837} & \numprint{9} & \numprint{0.0001} & \numprint{0.02}\% &\numprint{0.01}\% & \numprint{0.01}\% & \numprint{0.00}\%\\
\midrule
\end{tabular}
% generated on: 14/02/20 Feb:02:1581701957

  \end{adjustbox}
\end{table}
}

\end{document}